\newif\iflong \longfalse
\title{On characterising strong bisimilarity\\
  in a fragment of CCS with replication\\
  -- note --}
\date{}
\author{Daniel Hirschkoff\inst{1} and Damien Pous\inst{2}}
\institute{ENS Lyon, Universit\'e de Lyon, CNRS, INRIA \and SARDES,
  LIG, Grenoble, CNRS, INRIA} 
\begin{document}

\maketitle

\begin{abstract}
  We provide a characterisation of strong bisimilarity in a fragment
  of CCS that contains only prefix, parallel composition,
  synchronisation and a limited form of replication. The
  characterisation is not an axiomatisation, but is instead presented
  as a rewriting system.

  We discuss how our method allows us to derive a new congruence
  result in the $\pi$-calculus: congruence holds in the sub-calculus
  that does not include restriction nor sum, and features a limited
  form of replication. We have not formalised the latter result in all
  details.
\end{abstract}

\section{Introduction}

We study algebraic properties of strong bisimilarity in a sub-calculus
of CCS. Like in previous work~\cite{hirschkoff:pous:lmcs:08}, of which
the present study is a continuation, an important aspect of the
setting we analyse is the absence of the sum construct, and, more
generally, of any operator that would allow us to decompose parallel
composition.

We present a rewriting system that allows us to characterise 
% what is actually a failed attempt at obtaining a simple
% axiomatisation of
strong bisimilarity ($\sim$) in a very basic calculus that only
features prefixes, parallel composition, and replicated prefixes, with
the additional constraint that these can occur only at top-level. The
restriction and choice (or sum) operators are not included.  Handling
replication is the novel aspect w.r.t.\
\cite{hirschkoff:pous:lmcs:08}, and raises several difficulties when
trying to analyse the algebraic properties of $\sim$.

Let us focus on the properties of replication w.r.t.\ strong
bisimilarity. In our setting, the most important bisimilarity law for
replication is written
\begin{mathpar}
  \!a.P\,|\,a.P ~=~ \!a.P
  \enspace,
\end{mathpar}
\noindent and expresses that a replicated process acts as an
unbounded number of copies of that process in parallel.

It appears that we can generalise the above equality, by allowing a
replicated process to erase one of its copies (we are reading the
equality from left to right here) not only at top-level, but
arbitrarily deep in a term. In other words, if $C$ is a context (a
process with a hole), the law
\begin{mathpar}
  \!a.P\,|\,C[a.P] ~=~ \!a.P\,|\,C[\nil]
\end{mathpar}
\noindent should hold for strong bisimilarity (the previous equality is
obtained by taking $C=[\,]$).

This equality, together with the law $\!a.P|\!a.P = \!a.P$, are the
basic ingredients we need in order to characterise strong bisimilarity
between replicated terms. However, these equations are not enough, as
the following example shows: process $P_1 = \!a.(b|a.c)|\!a.(c|a.b)$
is bisimilar to $P_2 = \!a.b|\!a.c$. It seems reasonable to consider
$P_2$ as the normal form of $P_1$. Intuitively, $P_1$ can be obtained
from $P_2$ by inserting a copy of $a.b$ ``inside'' $\!a.c$, and,
symmetrically, a copy of $a.c$ inside $\!a.b$.  A related difficulty
appears with equalities like $\!a.(b|a.b) = \!a.b$, where the copy is
inserted in the replicated component itself.

Describing this phenomenon of ``mutual replication'' in all its
generality would lead to complicated equational schemata, and we have
not been able to come up with a simple, readable, presentation of
strong bisimilarity based on equational laws. Instead, we introduce a
rewriting relation on processes that allows us to compute normal forms
w.r.t.\ strong bisimilarity (in particular, we are able to rewrite
$P_1$ into $P_2$). This has the advantage of exposing the basic laws
that are at work when normalising a process. We show that our
characterisation of strong bisimilarity still holds when we enrich the
calculus with synchronisation. In turn, the method we describe can be
applied to derive a new congruence result on a subset of the
$\pi$-calculus (we must say we have not checked all details of this
result yet).

\medskip\noindent\textsl{Outline.}  We describe the subset of CCS we
work with in Sect.~\ref{sec:set}; in Sect.~\ref{sec:tec}, we introduce
a notion of normal forms and prove useful some technical results. The
rewriting system is defined in Sect.~\ref{sec:rwt}, where we show that
it allows us to reach normal forms. Section~\ref{sec:comm} is devoted
to the extension of our results to a calculus with synchronisations,
closer to the standard CCS. In Sect.~\ref{sec:piccl}, we give
concluding remarks, discussing in particular how these results lead to
a new congruence property in the $\pi$-calculus.

\section{The Setting}
\label{sec:set}

We work in the subset of CCS defined by the following grammar, where
we rely on a countable set of \emph{actions} $\alpha,\beta,\dots$:
\begin{align*}
  F ~&::=~ \nil \OR \alpha.F \OR F|F &
  P,Q ~&::=~ F \OR \!\alpha.F \OR P|P \tag{processes}\\
  D ~&::=~ [] \OR \alpha.D \OR D|F &
  C ~&::=~ D \OR \!\alpha.D \OR C|P \tag{contexts}
%  P ~~::=~~ \nil\OR P_1|P_2\OR \alpha.P\OR \!\alpha.P
\end{align*}
Our calculus features no communication, no restriction, no sum, and
allows replication only on prefixes, at top-level. We use $P,Q$ to
range over processes. A \emph{finite} process $(F)$ is a process which
does not contain an occurrence of the replication operator.
For $F=\alpha_1.F_1\,|\dots|\,\alpha_k.F_k$, we shall sometime write $F$ as
$\prod_{i\in[1..k]} \alpha_i.F_i$, and denote by $\!F$ or
$\prod_{i\in[1..k]} \!\alpha_i.F_i$ the process
$\!\alpha_1.F_1\,|\dots|\!\alpha_k.F_k$.
%, and $\prod_{i\in\emptyset} \alpha_i.P_i$ will
% stand for \nil. 
Note that $\!F$ will always denote a process having replicated
components only.

We use $C$ to range over single-hole \emph{contexts} mapping finite
processes to processes. Accordingly, we use $D$ to range over
(single-hole) \emph{finite contexts}, mapping finite processes to
finite processes. Note that the hole cannot occur directly under a
replication in $C$.

The labelled transition system associated to this process calculus is
standard (Fig.~\ref{fig:lts} -- note that there is no synchronisation
rule, this will be addressed in Sect.~\ref{sec:comm}), and
yields a notion of \emph{strong bisimilarity}, written $\sim$, which
is a congruence.
\begin{figure}[t]
  \centering
  \begin{mathpar}
    \inferrule{ }{\alpha.F \xr \alpha F} \and
    \inferrule{F_1 \xr\alpha F'_1}{F_1\,|\,F_2 \xr \alpha F'_1\,|\,F_2} \and
    \inferrule{F_2 \xr\alpha F'_2}{F_1\,|\,F_2 \xr \alpha F_1\,|\,F'_2} \\
    \inferrule{ }{\!\alpha.F \xr \alpha \!\alpha.F\,|\, F} \and
    \inferrule{P_1 \xr\alpha P'_1}{P_1\,|\,P_2 \xr \alpha P'_1\,|\,P_2} \and
    \inferrule{P_2 \xr\alpha P'_2}{P_1\,|\,P_2 \xr \alpha P_1\,|\,P'_2} 
  \end{mathpar}
  \caption{Labelled Transition System for our Subset of CCS}
  \label{fig:lts}
\end{figure}

%
% Structural congruence, written $\equiv$, is the smallest congruence
% satisfying the laws of an abelian monoid for $|$, where \nil{} is the
% neutral element. A process is said \emph{finite} if it is structurally
% congruent to a process that does not contain any occurrence of
% replication. Any finite process is structurally congruent to either
% \nil{} or a parallel composition of prefixed processes. We shall
% sometimes write $\alpha_1.P_1\,|\dots|\,\alpha_k.P_k$ as $\prod_{1\leq i\leq k}
% \alpha_i.P_i$.

We shall rely on the following characterisation of strong bismilarity
for finite processes, which is established
in~\cite{hirschkoff:pous:lmcs:08}:

\begin{defi}[Distribution law]
  \label{def:distrlaw}
  Let \eqD{} be the smallest congruence generated by the laws of an
  abelian monoid for parallel composition (the neutral element being
  \nil), and the following equation schema, called \emph{distribution
    law}, where there are as many occurrences of $F$ on both sides of
  the equation.
  \begin{mathpar}
    \alpha.(F|\alpha.F|\dots|\alpha.F) ~=~ \alpha.F|\alpha.F|\dots|\alpha.F
    \enspace,
  \end{mathpar}
\end{defi}
It is easy to show that this congruence is decidable, and we have
\begin{thm}
  \label{thm:distrlaw}
  \eqD{} coincides with strong bisimilarity $(\sim)$ on finite processes.
\end{thm}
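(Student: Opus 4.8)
The plan is to prove the two inclusions separately: soundness, $\eqD{}\subseteq{\sim}$, and completeness, ${\sim}\subseteq\eqD{}$, the latter being the substantial direction. For soundness, since $\sim$ is known to be a congruence, it suffices to check that each generating law of $\eqD{}$ is valid for $\sim$. The abelian monoid laws for parallel composition are verified by routine bisimulations (indeed they hold up to the evident structural identification). For the distribution law, writing the two sides as $L=\alpha.(F\,|\,(\alpha.F)^{n-1})$ and $R=(\alpha.F)^{n}$, I would exhibit the relation $\mathcal{R}=\{(L,R)\}\cup{\sim}$ and check that it is a bisimulation: the only prefix available to $L$ fires its outer $\alpha$, reaching $F\,|\,(\alpha.F)^{n-1}$, while every $\alpha$-move of $R$ fires one of its $n$ copies, reaching a process equal up to associativity and commutativity to $F\,|\,(\alpha.F)^{n-1}$; after this single step the two sides coincide, so the derivatives are related by $\sim$. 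Congruence of $\sim$ then lifts this to all instances in context.

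For completeness I would proceed through normal forms. First I orient the distribution law as a rewriting rule that flattens nested duplicated prefixes (rewriting $\alpha.(F\,|\,(\alpha.F)^{n-1})$ into $(\alpha.F)^{n}$), working modulo the monoid laws, and prove that this reduction terminates by exhibiting a well-founded measure on finite processes, for instance the multiset of the depths of their prefix subtrees, so that every finite process is $\eqD{}$-equal to a reduction-irreducible normal form. The heart of the argument is then a discrimination lemma: if $F$ and $G$ are normal forms with $F\sim G$, then $F$ and $G$ are equal up to the monoid laws, whence $F\eqD{}G$. I would establish this by induction on depth, analysing for each action $\alpha$ the $\alpha$-derivatives of $F$ and of $G$: matching them up to $\sim$, applying the induction hypothesis to the (structurally smaller) derivatives, and using a cancellation property of parallel composition, namely that $A\,|\,H\sim B\,|\,H$ implies $A\sim B$ on finite processes, to peel off the common residual context and recover a component-by-component correspondence between the top-level prefix subtrees of $F$ and $G$.

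The main obstacle is exactly this discrimination step. Two points need care. First, the cancellation property is itself a nontrivial unique-decomposition-style result for the sum-free calculus, which I would prove as a separate lemma or invoke from~\cite{hirschkoff:pous:lmcs:08}. Second, and more delicately, the normal-form hypothesis must be used in an essential way: the distribution law is precisely what makes the naive ``same transitions implies same syntax'' reasoning fail, since firing the outer prefix of $\alpha.(F\,|\,(\alpha.F)^{n-1})$ exposes further $\alpha$-branches that a flattened form fires independently. Restricting to reduction-irreducible forms removes this ambiguity, and the bulk of the effort is checking that no residual ambiguity survives, i.e.\ that the transition behaviour of a normal form determines its multiset of prefix subtrees uniquely.
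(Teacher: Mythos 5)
Your proposal addresses a statement that this paper does not actually prove: Theorem~\ref{thm:distrlaw} is imported from~\cite{hirschkoff:pous:lmcs:08}, so your attempt has to be judged against that reference rather than against anything in the present text. The soundness half of your argument is fine, and the architecture of the completeness half (a terminating rewriting relation that orients the distribution law, followed by a discrimination lemma stating that bisimilar normal forms coincide up to the monoid laws) is a reasonable reconstruction. The gap is in the discrimination lemma, which you rightly identify as the heart of the matter, and the specific tool you propose there does not apply. After matching a transition $F\xr{\alpha}F_i\,|\,\prod_{i'\neq i}\alpha_{i'}.F_{i'}$ of $F=\prod_i\alpha_i.F_i$ with a transition $G\xr{\alpha}G_j\,|\,\prod_{j'\neq j}\beta_{j'}.G_{j'}$ of $G=\prod_j\beta_j.G_j$, the two ``residual contexts'' are not a common factor $H$ that cancellation ($A|H\sim B|H$ implies $A\sim B$) could peel off: they are the remaining components of $F$ and of $G$ respectively, and their equality is precisely what is being proved, so this step is circular as stated. (A secondary flaw: the induction cannot be on depth, since the derivative displayed above has the same depth as $F$ unless $i$ was the unique deepest component; it should be on size.)

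The case that carries all the difficulty is when the matched components $\alpha_i.F_i$ and $\beta_j.G_j$ are \emph{not} equal up to the monoid laws. Applying your induction hypothesis to the bisimilar derivatives then yields equalities between multisets of prefixed subterms, but such one-step equations are consistent with ``mutual insertion'' configurations like $F=\alpha.(Y|W)\,|\,X\,|\,Z$ versus $G=\alpha.(X|W)\,|\,Y\,|\,Z$ with $X\neq Y$ --- the finite analogue of exactly the phenomenon this paper uses to motivate its rewriting system ($!a.(b|a.c)\,|\,!a.(c|a.b)\sim\,!a.b\,|\,!a.c$). Ruling these out is not bookkeeping; in the cited reference it rests on the cancellation lemma, Milner--Moller unique decomposition into primes, and a key lemma showing that a prefixed process bisimilar to a nontrivial parallel composition must be, up to $\sim$, of the distribution-law shape (equivalently, every parallel component of an irreducible term is prime). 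With that lemma, unique prime decomposition pairs off the components of $F$ and $G$ up to $\sim$ and the induction closes; without it, the ``residual ambiguity'' you defer at the end of your sketch is the entire content of the theorem.
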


\section{Preliminary Technical Results}
\label{sec:tec}

We present some technical results about strong bisimilarity. Most of
these help us isolating the replicated part from the finite part in
processes being compared. Indeed, when characterising strong
bisimilarity, we shall prove that $P\sim Q$ implies that the
replicated parts of $P$ and $Q$ are bisimilar, and we also need
somehow to reason about the finite parts of $P$ and $Q$.

% The following proposition shows that copies of a replicated process
% can be inserted anywhere, in any term containing this replicated
% process modulo bisimilarity. 
The following property is necessary to derive correction of the
rewrite system we define below.
\begin{prop}\label{prop:simplelaw}
  If $C[\nil]\sim\!\alpha.F|P$, then $C[\nil]\sim C[\alpha.F]$.
\end{prop}
\begin{proof}
  We show that $\R = \set{(C[\nil], C[\alpha.F])~/~\forall C\text{
      s.t.\ } C[\nil]\sim\!\alpha.F|P\textrm{ for some }P}$ is a
  strong bisimulation up to transitivity and parallel composition
  (cf.~\cite{SW01,phd:pous}).

  There are three cases to consider in the bisimulation game:
  \begin{itemize}
  \item the hole occurs at top-level in the context ($C=[]|Q$) and the
    right-hand side process does the following transition:
    $C[\alpha.F]\xr\alpha F|Q$. By hypothesis, $Q\sim\!\alpha.F|P$ so
    that we find $Q'$ such that $Q\xr\alpha Q'$ and $Q'\sim
    \!\alpha.F|F|P$. By injecting the latter equality, we obtain
    $Q'\sim Q|F$ so that $Q'$ closes the diagram.
  \item the hole occurs under a replicated prefix in the context
    ($C=\!\beta.D|Q$) and this prefix is fired: we have
    $C[\nil]\xr\beta P_l=C[\nil]|D[\nil]$ and $C[\alpha.F]\xr\beta
    P_r=C[\alpha.F]|D[\alpha.F]$. This is where we need the up-to
    technique: these processes are not related by $\R$ (recall that we
    work with single-hole contexts). However, we can deduce $P_1\R
    P_c=C[\nil]|D[\alpha.F]$, by considering the context
    $C'=C[\nil]|D[]$, and checking that
    $C'[\nil]\sim\!\alpha.F|P|D[\nil]$. We finally check that $P_c$
    and $P_r$ are related by the closure of $\R$ under parallel
    contexts (by removing the $D[\alpha.F]$ component).
  \item in the last case, either the hole occurs under a
    non-replicated prefix in the contexts ($C=\beta.D|Q$), or the
    contexts triggers a transition that does not involve or duplicate
    the hole; this case is treated by a simple reasoning -- just play
    the bisimulation game.  \qed
  \end{itemize}
\end{proof}

As a consequence, we obtain the validity of the following laws:
\begin{mathpar}
  \!\alpha.F~|~C[\alpha.F] ~\sim~ \!\alpha.F~|~C[0] \quad(A)\and
  \!\alpha.D[\alpha.D[\nil]] ~\sim~ \!\alpha.D[\nil] \quad(A')
\end{mathpar}

\begin{lem}\label{lem:cancel:finite:part}
  If $\!F\sim \alpha.F'|Q$, then $\!F\sim\!F|\alpha.F'$.
\end{lem}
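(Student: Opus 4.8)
The plan is to deduce the statement from Proposition~\ref{prop:simplelaw}, reducing it to a single residual fact that is then established by a bisimulation up to transitivity and parallel composition, in the style of the proof above. First I instantiate the Proposition with the context $C = \![\,]\,|\,\!F$ (legal, since the hole sits at top-level and not under a replication, and $|$ is commutative up to $\sim$). Then $C[\nil]\sim\!F$ and $C[\alpha.F']\sim\!F\,|\,\alpha.F'$, so the conclusion $C[\nil]\sim C[\alpha.F']$ of the Proposition is exactly the goal. Its premise asks for some $P$ with $\!F\sim\!\alpha.F'\,|\,P$. Taking $P=Q$ and using the hypothesis $\!F\sim\alpha.F'\,|\,Q$, this premise becomes the residual fact $(\dagger)$: $\alpha.F'\,|\,Q~\sim~\!\alpha.F'\,|\,Q$, i.e.\ a single copy of $\alpha.F'$ may be promoted to a replicated server in the context of $Q$.

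Before attacking $(\dagger)$ I record two facts. Since $\alpha.F'\,|\,Q\xr\alpha F'\,|\,Q$ and $\!F\sim\alpha.F'\,|\,Q$, and $\!F$ has only replicated components, some component $\!\alpha.F_i$ must answer this move, yielding $\!F\,|\,F_i\sim F'\,|\,Q$; call this $(*)$. Moreover, applying law $(A)$ with an empty context together with the hypothesis gives $\!\alpha.F'\,|\,\!F\sim\!\alpha.F'\,|\,\alpha.F'\,|\,Q\sim\!\alpha.F'\,|\,Q$. Hence $(\dagger)$ is equivalent to $\!F\sim\!\alpha.F'\,|\,\!F$, i.e.\ to showing that $\!F$ absorbs an extra replicated server $\!\alpha.F'$.

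I then prove $\!F\sim\!\alpha.F'\,|\,\!F$ with the seed relation $\R=\set{(\!F,\;\!\alpha.F'\,|\,\!F)}$, checked as a bisimulation up to transitivity and parallel composition. Every move of $\!F$, and every move of the duplicated $\!F$ on the right, is matched by the identical move on the other side and lands back in $\R$ closed under a parallel context $\,|\,F_j$. The only genuine case is the extra $\alpha$-move of $\!\alpha.F'$, which spawns a spare $F'$; here I replay the bisimulation witnessing the hypothesis through $(*)$, matching the spare $F'$ by firing the genuine component $\!\alpha.F_i$ of $\!F$, and I use law $(A)$ to discard the redundant components, returning into $\R$ up to $\sim$ and parallel composition.

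The hard part will be precisely this last case, whose content is the identity $\!F\,|\,F'~\sim~\!F\,|\,F_i$ (equivalently $\!F\,|\,F'\sim F'\,|\,Q$): the freshly produced finite copy $F'$ must behave like the genuine copy $F_i$. This cannot be obtained from the equational laws in isolation — law $(A)$ only lets $\!F$ reabsorb $\alpha.F_i$, not $\alpha.F'$ — and every attempt to rewrite it loops back to $(\dagger)$, which is the expected manifestation of the failure of cancellation in this sum-free setting. The whole point of routing through the up-to technique is to break this loop: rather than establishing $\!F\,|\,F'\sim F'\,|\,Q$ on its own, I discharge it inside the bisimulation game by transporting the hypothesis' bisimulation along the matched $\alpha$-transition, letting the surplus copies be absorbed by law $(A)$ instead of cancelled. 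I expect the careful bookkeeping of which replicated component answers each move, and the verification that the up-to closure is respected at this step, to be where the real work lies.
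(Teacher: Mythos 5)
Your proposal has a genuine gap, and you have in fact located it yourself: the last case of the bisimulation game. The reduction in your first two paragraphs is sound (modulo the slip that $C=\![\,]\,|\,\!F$ is not a legal context --- the hole may not occur under a replication, and replication only applies to prefixed terms; your computations of $C[\nil]$ and $C[\alpha.F']$ show you really mean $C=[\,]\,|\,\!F$). After that reduction, everything hinges on proving $\!F\sim\,\!\alpha.F'\,|\,\!F$. But with the candidate $\{(\!F,\;\!\alpha.F'|\!F)\}$, when the extra server fires $\alpha$ you must relate $\!F|F_i$ with $\!\alpha.F'|F'|\!F$, i.e.\ establish $\!F|F'\sim\!F|F_i$ (for the $i$ given by $(*)$), and this is \emph{not} discharged by the up-to machinery. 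Up to parallel composition you can only strip a common component $G$, so any chain through the closure needs either $\!F|F_i\sim\!F|F'$ or $\!\alpha.F'|\!F|F_i\sim\!\alpha.F'|\!F|F'$; and every algebraic route to these loops back, e.g.\ congruence on the hypothesis and on $(*)$ gives $\!F|F'\sim\alpha.F'|Q|F'\sim\!F|F_i|\alpha.F'$, reducing the claim to $\!F|F_i$ absorbing one copy of $\alpha.F'$ --- an instance of the very statement being proved. ``Transporting the hypothesis' bisimulation along the matched transition'' is not a sound up-to step; as written, your relation simply fails to be a bisimulation up to the named techniques, and no bookkeeping inside the game closes the circle.

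The missing idea --- and the paper's actual proof --- breaks the circularity from \emph{outside} the game: bisimilarity is preserved by replicating the (finite, top-level) components of both sides, a fact the paper proves once and for all, independently of this lemma. From $\!F\sim\alpha.F'|Q$ one thus gets $\!F\sim\,\!\alpha.F'\,|\,\!Q$, and then everything is algebra: adding $\!\alpha.F'$ in parallel and using $\!P|\!P\sim\!P$ yields $\!F|\!\alpha.F'\sim\,\!\alpha.F'|\!Q\sim\!F$, which is exactly your reduced goal $\!F\sim\,\!\alpha.F'|\!F$; law $(A)$ and congruence then give $\!F\sim\!F|\alpha.F'$, just as in your second paragraph. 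So your reduction runs parallel to the paper's chain of equivalences, but the core claim it rests on is left unproven, and the bisimulation-up-to route you propose for it cannot succeed without importing something like this replication-preservation lemma.
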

\begin{proof}
  Purely algebraically: replicate everything and add $\!\alpha.F'$ in
  parallel, this yields $\!F|\!\alpha.F' \sim \!\alpha.F'|\!Q|\!\alpha.F'$,
  from which we deduce $\!F|\!\alpha.F' \sim \!\alpha.F'|\!Q\sim
  \!\alpha.F'|\!Q|\alpha.F'\sim \!F|\alpha.F'$.
  (Note that, when writing $\!Q$, we actually refer to the process
  obtained by adding replication at top-level on the finite components
  of $Q$; we easily show that this operation preserves bisimilarity.)
  \qed
\end{proof}

\begin{lem}\label{lem:pouslemma}
  If $F=\prod_i \alpha_i.F_i$ and $\!F\sim\!\alpha.F'|Q$, then there
  exists $j$ s.t.\ $\!F\sim \!\alpha.F'~|~\prod_{i\neq
    j}\!\alpha_i.F_i$ and $\alpha_j=\alpha$.
\end{lem}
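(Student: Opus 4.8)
The plan is to extract a single usable consequence of the hypothesis by playing the bisimulation game once, and then to conclude by a bisimulation up to transitivity and parallel composition, using the same toolkit as in the proof of Proposition~\ref{prop:simplelaw}.

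First I would fire the replicated prefix on the right-hand side of the hypothesis. From $\!F\sim\!\alpha.F'|Q$ and the transition $\!\alpha.F'|Q\xr\alpha\!\alpha.F'|F'|Q$, the game provides a matching move $\!F\xr\alpha R_0$ with $R_0\sim\!\alpha.F'|F'|Q$. Since every $\alpha$-labelled transition of $\!F=\prod_i\!\alpha_i.F_i$ fires one replicated component $\!\alpha_i.F_i$ with $\alpha_i=\alpha$ and yields the residual $\!F|F_i$, necessarily $R_0=\!F|F_j$ for some $j$ with $\alpha_j=\alpha$; this is the index announced by the statement. Closing the hypothesis under the parallel context $[\,]|F'$ gives $\!F|F'\sim\!\alpha.F'|Q|F'\sim\!\alpha.F'|F'|Q$, and combining the two bisimilarities yields the single fact I will rely on,
\begin{mathpar}
  \!F|F_j ~\sim~ \!F|F' \quad(\star)
\end{mathpar}
which, notably, uses only transition matching and congruence, and not the replication laws.

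It then remains to prove $\!F\sim\!\alpha.F'|R$, where $R=\prod_{i\neq j}\!\alpha_i.F_i$ for the $j$ just found, so that $\!F=\!\alpha_j.F_j|R$. I would show that the singleton relation $\R=\set{(\!F,\,\!\alpha.F'|R)}$ is a strong bisimulation up to transitivity and parallel composition. All transitions to match are firings of replicated prefixes. A component $\!\alpha_i.F_i$ of $R$ (with $i\neq j$) occurs verbatim on both sides, so its firing is answered by the identical move, reaching the pair $(\!F|F_i,\,\!\alpha.F'|R|F_i)$, which is the pair of $\R$ placed in the parallel context $[\,]|F_i$. The crucial case pits the firing of $\!\alpha_j.F_j$ on the left, producing $\!F|F_j$, against the firing of $\!\alpha.F'$ on the right, producing $\!\alpha.F'|F'|R$: the two residuals differ, one carrying $F_j$ and the other $F'$. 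This is precisely where $(\star)$ intervenes: $\!F|F_j\sim\!F|F'$ by $(\star)$, and $(\!F|F',\,\!\alpha.F'|R|F')$ is the pair of $\R$ in the context $[\,]|F'$, so the residuals are related by the composition of $\sim$ with the parallel closure of $\R$. The symmetric challenge, firing $\!\alpha.F'$ on the right and answering with $\!\alpha_j.F_j$ on the left, is handled identically.

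I expect the main obstacle to be an apparent circularity: every purely equational attempt to cancel $\!\alpha_j.F_j$ against $\!\alpha.F'$ reduces the goal to itself, and the residuals produced by firing the replicated prefixes are again ``the goal, in a parallel context''. This is exactly why I anticipate that a direct algebraic argument fails whereas the up-to technique succeeds: the up-to-parallel closure discharges the identical finite residuals without re-proving the goal, while up-to transitivity lets $(\star)$ bridge the single genuine discrepancy between $F_j$ and $F'$. The only points requiring care are to verify that the case analysis is exhaustive — which it is, since the hole cannot sit under a replication and the top level consists solely of replicated prefixes — and to appeal to the soundness of the combined up-to technique (cf.~\cite{SW01,phd:pous}), just as in Proposition~\ref{prop:simplelaw}.
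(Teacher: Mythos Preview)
Your proposal is correct and follows essentially the same route as the paper: fire the replicated $\alpha$ on the right to isolate an index $j$ with $\alpha_j=\alpha$ and derive $\!F|F_j\sim\!F|F'$, then close the diagram via the singleton candidate $\{(\!F,\,\!\alpha.F'|\prod_{i\neq j}\!\alpha_i.F_i)\}$ treated as a bisimulation up to bisimilarity and parallel contexts, with the $i\neq j$ case handled by cancelling $F_i$ and the $j$/$\alpha$ case handled by rewriting via $(\star)$ and cancelling $F'$. The only cosmetic difference is terminology (``up to transitivity'' versus ``up to bisimilarity''), which denotes the same closure here.
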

\begin{proof}
  By firing $\alpha.F'$ on the right-hand side, we find $j$ such that
  $\alpha_j=\alpha$ and $\!F|F_j\sim \!\alpha.F'|F'|Q$, from which we
  deduce $\!F|F_j\sim \!F|F'$. Then we show that the singleton
  relation $\{(\!F,\!\alpha.F'|\prod_{i\neq j}\!\alpha_i.F_i)\}$ is a
  bisimulation up to bisimilarity and parallel contexts.
  \begin{itemize}
  \item when a transition on $\alpha_i$ is triggered, with $i\neq j$,
    we reason up to parallel composition in order to remove the $F_i$
    component on both sides;
  \item when a transition on $\alpha_j$ (or $\alpha$) is triggered, we
    have to relate processes $\!F|F_j$ and
    $\!\alpha.F'|F'|\prod_{i\neq j}\!\alpha_i.F_i$ ; we reason up to
    bisimilarity in order to rewrite $\!F|F_j$ into $\!F|F'$ and then
    up to parallel context in order to remove the $F'$ component. \qed
  \end{itemize}
\end{proof}

\medskip

Now we define our notion of normal forms (\emph{seeds}).
\begin{defi}[Size, seed] %, notations]
  The size of $P$, noted \size{P}, is the number of prefixes in $P$.

  A \emph{seed} of $P$, noted \seed{P} is a process of minimal size
  such that $P\sim\seed{P}$.
  % The \emph{seed} of $\alpha.P$, noted \seed{\alpha.P}, is a
  % \textbf{finite} process of minimal size such that
  % $\!\alpha.P\visbis \!\alpha.\seed{P}$.
\end{defi}

The seed of a process is defined modulo bisimilarity. We establish in
this section that all seeds of a process are actually equated by
\eqD{} (Prop.~\ref{prop:compare:seeds}).  Note that, because $\sim$ is
a congruence in our calculus, if $P_1|P_2$ is a seed, then $P_1$ is a
seed. Indeed, if $\size{P'_1}<\size{P_1}$ and $P'_1\sim P_1$, then
$P_1|P_2\sim P'_1|P_2$, which contradicts the fact that $P_1|P_2$ is a
seed.

\medskip
\noindent \textbf{Notations.} 

We shall use $S, S'$ to range over seeds having only replicated
components.
%
% From now on, write $P=\prod_j \alpha_j.P_j$, and
% suppose $\!P\sim S = \prod_i \!\alpha_i.S_i$.
%
We write $P\reduct k Q$ whenever there exist $\alpha_1,..,\alpha_k$
and $P_0,..,P_k$ such that $P=P_0\xr{\alpha_1}P_1\dots
\xr{\alpha_k}P_k\eqD Q$.
Note that $P\reduct k \alpha.F$ for some $k$ if and only if
$P\eqD D[\alpha.F]$ for some finite context $D$.
For $S = \prod_i \!\alpha_i.S_i$, we write $S\dis F$ to denote the
fact that $\neg(\exists i,k,\, F\reduct k \alpha_i.S_i)$, i.e., that
$F$ does not contain a sub-term of the form $\alpha_i.S_i$. On the
contrary, we write $S\purg F$ when there exists $k>0$ such that
$S\reduct k S|F$, that is, when $F$ is a parallel composition of
sub-terms of the $S_i$s. In the sequel, we shall use $R$ to range over
finite processes satisfying the latter property.

We can remark that if $S\dis F$ (resp.\ $S\purg F$) and $F\xr\alpha
F'$, then $S\dis F'$ (resp.\ $S\purg F'$).  

\begin{lem}\label{lem:disprops}
  (i)~If $S|F$ is a seed, then $S\dis F$; 
  (ii)~if $S\purg R$, then $S\dis R$.
\end{lem}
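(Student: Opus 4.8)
The plan is to prove both parts by contradiction, exploiting the minimality of seeds together with the size-preservation of \eqD{} (which is immediate: the monoid laws and the distribution law all relate terms carrying the same number of prefixes).

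For (i), suppose $S|F$ is a seed but $S\not\dis F$. Then there are $i,k$ with $F\reduct k \alpha_i.S_i$, hence $F\eqD D[\alpha_i.S_i]$ for some finite context $D$. Writing $S=\!\alpha_i.S_i\,|\,S'$ and applying law $(A)$ with the context $C=S'\,|\,D[\,]$ (whose hole sits inside the finite context $D$, hence not under a replication), I get $S|F \eqD S\,|\,D[\alpha_i.S_i]\sim S\,|\,D[\nil]$. Since $D[\alpha_i.S_i]$ contains at least the prefix $\alpha_i$, we have $\size{D[\nil]}<\size{D[\alpha_i.S_i]}=\size F$, so $S\,|\,D[\nil]$ is strictly smaller than $S|F$ while being bisimilar to it, contradicting that $S|F$ is a seed. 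This part is routine.

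For (ii), suppose $S\purg R$ but $S\not\dis R$, so $R\reduct{k'}\alpha_i.S_i$ for some $i$. Using the remark that $\purg$ is preserved along transitions of its finite argument, firing $R$ down to a term $\eqD\alpha_i.S_i$ yields $S\purg\alpha_i.S_i$, i.e.\ $S\reduct m S\,|\,\alpha_i.S_i$ with $m>0$. The key step is a provenance argument: since firing only strips top prefixes and spawns bodies, an occurrence of $\alpha_i.S_i$ with its body intact in a residual of $S$ must originate verbatim from inside some body $S_j$, whence $S_j\eqD D[\alpha_i.S_i]$ for a finite context $D$. Note that $i\neq j$ necessarily: if $i=j$ we would have $S_i\eqD D[\alpha_i.S_i]$, impossible since \eqD{} preserves size while the right-hand side has strictly more prefixes than $S_i$. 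I then form $S^\star$ by replacing the component $\!\alpha_j.S_j=\!\alpha_j.D[\alpha_i.S_i]$ with $\!\alpha_j.D[\nil]$, so that $\size{S^\star}<\size S$; if $S\sim S^\star$, the minimality of the seed $S$ is contradicted and we are done.

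The main obstacle is precisely establishing $S\sim S^\star$, i.e.\ erasing a copy of $\alpha_i.S_i$ that sits under a \emph{different} replicated prefix $\!\alpha_j$. This is not a direct instance of law $(A)$ (whose erased copy must lie under no replication), nor of $(A')$ (which concerns the same replication), because the occurrence is guarded by $\!\alpha_j$. I expect to handle it by a bisimulation argument in the spirit of Proposition~\ref{prop:simplelaw}: once $\!\alpha_j$ is fired, the spawned copy of $D[\alpha_i.S_i]$ appears at top level, where the companion component $\!\alpha_i.S_i$ lets law $(A)$ remove the exposed $\alpha_i.S_i$; a bisimulation up to transitivity and parallel composition then closes the remaining diagrams. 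An alternative, if available, would be to show that $S|R$ is itself a seed and invoke part~(i) directly, but this seems to require structural facts about seeds not yet established at this point.
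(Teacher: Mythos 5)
Your part~(i) coincides with the paper's proof, and the skeleton of your part~(ii) --- tracing the occurrence of $\alpha_i.S_i$ back to some body $S_j$, ruling out $i=j$ by size-preservation of $\eqD$, and building the smaller candidate $S^\star$ with $\!\alpha_j.D[\alpha_i.S_i]$ replaced by $\!\alpha_j.D[\nil]$ --- is also exactly the paper's argument. The problem is the step you label ``the main obstacle'' and leave unproven: $S\sim S^\star$. Your claim that this is not an instance of law $(A)$ because the erased copy ``must lie under no replication'' is a misreading of the context grammar. Contexts are generated by $C ::= D \OR \!\alpha.D \OR C|P$, so the hole of a $C$ may perfectly well sit under a replicated prefix, guarded inside the finite context $D$; the paper's caveat that ``the hole cannot occur directly under a replication'' only excludes an unguarded hole $\![\,]$. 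Indeed, the proof of Prop.~\ref{prop:simplelaw} devotes its second case precisely to contexts of the shape $C=\!\beta.D|Q$ whose hole lies under a replicated prefix --- that case is the very reason the up-to technique is needed there. Consequently, instantiating law $(A)$ with the replicated component $\!\alpha_i.S_i$ and the context $C = \!\alpha_j.D[\,]~|~\prod_{k\neq i,j}\!\alpha_k.S_k$ gives $S\sim S^\star$ in one step, contradicting minimality of $S$; this is exactly how the paper concludes.

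So the bisimulation-up-to argument you defer to is unnecessary (it would amount to re-proving, for this particular context, the case of Prop.~\ref{prop:simplelaw} you already have at your disposal), but as written your proposal is genuinely incomplete: its crucial step is conjectured rather than established. Once law $(A)$ is read with its actual scope, your argument closes and is the same as the paper's. Your discarded alternative --- showing $S|R$ is itself a seed and invoking part~(i) --- is indeed rightly discarded: nothing available at this point of the development yields that fact.
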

\begin{proof}
  \begin{enumerate}[(i)]
  \item By contradiction, if $F\reduct k \alpha_i.S_i$, then
    $F\eqD D[\alpha_i.S_i]$. By law~$(A)$, $S|F\sim S|D[\nil]$ which
    contradicts the minimality hypothesis about $S|F$.
  \item Again, by contradiction, suppose that $R\eqD D[\alpha_i.S_i]$.
    Since, $S\purg R$, there exist $j,D'$ such that
    $S_j\eqD D'[D[\alpha_i.S_i]]$, from which we deduce
    $S~\sim~\prod_{k\neq j}\!\alpha_k.S_k~|~\!a_j.D'[D[\nil]]$ by $(A)$
    (we necessarily have $i\neq j$). This is contradictory with the
    fact that $S$ is a seed. \qed
  \end{enumerate}
\end{proof}

\begin{lem}\label{lem:Rnil}
  $S\sim S|R$ and $S\purg R$ entail $R=\nil$.
\end{lem}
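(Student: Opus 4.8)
The plan is to argue by contradiction, assuming $R\neq\nil$, and to use crucially that $S$ is a \emph{seed}: no process strictly smaller than $S$ can be bisimilar to it. Write $S=\prod_i\!\alpha_i.S_i$ (all bodies $S_i$ being finite, so that $S=\!F$ with $F=\prod_i\alpha_i.S_i$). From $S\purg R$ and Lemma~\ref{lem:disprops}(ii) I first record $S\dis R$, i.e.\ no sub-term of $R$ is $\eqD$-equal to a replicated body $\alpha_i.S_i$; this non-coincidence with the bodies of $S$ is exactly what the argument will eventually contradict.

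First I would reduce to the case where $R$ is a single prefixed sub-term. Writing $R=\prod_j \gamma_j.R_j$ and isolating one factor, $S|R=(\gamma_j.R_j)\mid(S|\prod_{l\neq j}\gamma_l.R_l)$ has the shape $\alpha.F'|Q$ required by Lemma~\ref{lem:cancel:finite:part} (with $S=\!F$), which yields $S\sim S|\gamma_j.R_j$ for \emph{each} factor. Each factor is again purgeable, so if $R$ had at least two factors, one of them would be a purgeable $R'$ with $S\sim S|R'$ and $\size{R'}<\size R$; an induction on $\size R$ would then force $R'=\nil$, contradicting $R'=\gamma_j.R_j\neq\nil$. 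Hence it suffices to treat $R=\gamma.R_1$, a single sub-term of some body $S_{i_0}$, with $S\sim S|\gamma.R_1$ and $S\dis\gamma.R_1$.

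Next I would fire the exposed prefix. The challenge $S|\gamma.R_1\xr\gamma S|R_1$ must be matched from $S$; since every transition of $S$ fires one of its replicated components, the match is $S\xr\gamma S|S_m$ for some $m$ with $\alpha_m=\gamma$, and bisimilarity gives $S|S_m\sim S|R_1$. The aim of this step is to \emph{cancel} the common $S$ and conclude $R_1\sim S_m$. Both $R_1$ and $S_m$ are purgeable finite processes, so by Theorem~\ref{thm:distrlaw} the equivalence $R_1\sim S_m$ upgrades to $R_1\eqD S_m$, whence $\gamma.R_1\eqD\gamma.S_m=\alpha_m.S_m$. But this exhibits $\gamma.R_1$ as $\eqD$-equal to a replicated body, i.e.\ $\gamma.R_1\reduct 0\alpha_m.S_m$, contradicting $S\dis\gamma.R_1$. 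This closes the argument uniformly, with no separate treatment of $R_1=\nil$ (which would merely give $S_m\eqD\nil$ and $\gamma.R_1=\gamma\eqD\alpha_m.S_m$).

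The delicate point — and the main obstacle — is precisely the cancellation $S|S_m\sim S|R_1\Rightarrow R_1\sim S_m$. General cancellation of a $\!$-process fails, and indeed $S\sim S|S_m$ is false in general (e.g.\ $\!a.b\not\sim\!a.b\mid b$), so one cannot absorb $S_m$ or $R_1$ separately. I expect this step to need its own development: a bisimulation \emph{up to parallel composition} relating $R_1$ and $S_m$, driven by the minimality of the seed $S$ and by the purgeability of both residuals (which keeps every intermediate process built from sub-terms of the $S_i$), in the same spirit as the proofs of Lemmas~\ref{lem:cancel:finite:part} and~\ref{lem:pouslemma}. Making this cancellation precise — presumably through an auxiliary induction on $\size S$ or on $\size{R_1}+\size{S_m}$ — is where the real work lies.
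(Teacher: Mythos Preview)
Your reduction to a single prefixed factor is harmless (and in fact unnecessary: the paper just writes $R=\alpha.R'\mid R''$ and isolates one factor via Lemma~\ref{lem:cancel:finite:part}, exactly as you do, without an inner induction). The real issue is the step you yourself flag as ``delicate'': the cancellation
\[
  S\mid S_m \sim S\mid R_1 \;\Longrightarrow\; R_1\sim S_m.
\]
In the paper's development this cancellation is precisely the content of Lemmas~\ref{lem:cancel:FR}--\ref{lem:RR}, and those lemmas are proved \emph{after} Lemma~\ref{lem:Rnil} and \emph{use} it (Lemma~\ref{lem:cancel:FR} invokes Lemma~\ref{lem:Rnil} in its size argument). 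So your proposed route is circular unless you re-prove a form of cancellation from scratch; your sketch of a bisimulation up to parallel contexts does not explain how to rule out the case where the $S$-component, rather than $S_m$, answers a challenge from $R_1$ --- which is exactly where Lemma~\ref{lem:Rnil} intervenes in the paper's proof of Lemma~\ref{lem:cancel:FR}.

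The paper avoids this difficulty entirely. After obtaining $S\sim S\mid\alpha.R'$ from Lemma~\ref{lem:cancel:finite:part}, it does \emph{not} fire the prefix; instead it \emph{replicates}, getting $S\sim S\mid\!\alpha.R'$, and then applies Lemma~\ref{lem:pouslemma} to replace one replicated component of $S$ by $\!\alpha.R'$:
\[
  S \;\sim\; \!\alpha.R' \;\Big|\; \prod_{k\neq i}\!\alpha_k.S_k\,.
\]
Since $S\purg R$, the factor $\alpha.R'$ is a sub-term of some $S_j$, say $S_j\eqD D[\alpha.R']$. If $i=j$ this already gives a process bisimilar to $S$ of strictly smaller size; if $i\neq j$, law~$(A)$ lets one erase the copy of $\alpha.R'$ inside $S_j$, again producing a strictly smaller bisimilar process. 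Either way the minimality of the seed $S$ is contradicted. The key lemma you are missing is Lemma~\ref{lem:pouslemma}: it performs the ``substitution'' of a replicated factor without any cancellation of finite parts, and that is what makes the argument go through at this point of the development.
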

% \begin{lem}\label{lem:Mnil}
%   $S\sim S|\!M$ entails $M=\nil$.
% \end{lem}
%   \comment{$\!\nil$ n'existe pas\dots il faut ecrire $\!\prod_i
%   \alpha_i.M_i$?} 
\begin{proof}
  Suppose by contradiction $R = \alpha.R'|R"$. By
  Lemma~\ref{lem:cancel:finite:part}, we have $S\sim S|\alpha.R'$ and
  $S\sim S|\!\alpha.R'$ by replicating all processes. By
  Lemma~\ref{lem:pouslemma} there exists $i$ such that $S\sim
  \!\alpha.R'|\prod_{k\neq i}\alpha_k.S_k$. Now, since $S\purg R$,
  there exist some $j,D$ such that $S_j\eqD D[\alpha.R']$; if $i=j$,
  we have obtained a smaller seed; otherwise, we use $(A)$ to show
  that $\!\alpha.R'|\!\alpha_j.D[\nil]|\prod_{k\neq i,j}\alpha_k.S_k$
  is a smaller seed.  \qed
\end{proof}

\begin{lem}\label{lem:cancel:finite}
  $\!F_1|F'_1\sim\!F_2|F'_2$ entails $\!F_1\sim\!F_2$.
\end{lem}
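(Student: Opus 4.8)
The plan is to reduce the statement to a comparison of seeds. First I would pass to purely replicated seeds $S_1\sim\!F_1$ and $S_2\sim\!F_2$; these exist and are purely replicated, since any finite residual of a seed of a purely replicated process can be absorbed by Lemma~\ref{lem:cancel:finite:part} and then shown empty by Lemma~\ref{lem:Rnil}. By congruence of $\sim$, the hypothesis becomes $S_1|F'_1\sim S_2|F'_2$ and the goal becomes $S_1\sim S_2$. Using law~$(A)$ repeatedly to delete, inside $F'_i$, every subterm that is a copy of a component of $S_i$, I obtain finite processes $R_i$ with $S_i|F'_i\sim S_i|R_i$ and $S_i\dis R_i$; this rewriting terminates since it strictly decreases $\size{F'_i}$. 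Hence it suffices to treat $S_1|R_1\sim S_2|R_2$ with $S_i$ purely replicated and $S_i\dis R_i$.

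Next I would take seeds on both sides. Writing $\seed{S_i|R_i}=S'_i|R'_i$ with $S'_i$ purely replicated and $R'_i$ finite, Lemma~\ref{lem:disprops}(i) gives $S'_i\dis R'_i$. Since $S_1|R_1\sim S_2|R_2$, Proposition~\ref{prop:compare:seeds} yields $S'_1|R'_1\eqD S'_2|R'_2$. The key structural observation is that $\eqD$ cannot exchange replicated material for finite material: the defining laws (the monoid laws and the distribution law of Definition~\ref{def:distrlaw}) only rearrange top-level parallel components and rewrite inside finite bodies, so a replication-guarded component stays a replication-guarded component along any $\eqD$-derivation. Formalising this by induction on the derivation, I would conclude that the multiset of replicated components is an $\eqD$-invariant, whence $S'_1\eqD S'_2$ (and $R'_1\eqD R'_2$), and in particular $S'_1\sim S'_2$ by Theorem~\ref{thm:distrlaw}.

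It remains to identify $S'_i$ with $S_i$, that is, to show that enriching a purely replicated process with a finite context does not alter its replicated part up to $\sim$: $S'_i\sim S_i$. This I regard as the main obstacle. The idea is that $R_i$ is finite and can therefore contribute only finitely along any single run, so the recurrently available behaviour of $S_i|R_i$ is exactly that of $S_i$. Concretely, I would show that each top action of $S'_i$ is matched by a component of $S_i$ and conversely: firing a replicated prefix of $S_i$ arbitrarily many times forces, by finiteness of $R_i$, a matching replicated component on the other side, and Lemmas~\ref{lem:pouslemma} and~\ref{lem:Rnil} let me peel off matched components one at a time while controlling the finite residue through $S_i\dis R_i$.

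Granting $S'_i\sim S_i$, I would conclude $S_1\sim S'_1\sim S'_2\sim S_2$, hence $\!F_1\sim S_1\sim S_2\sim\!F_2$, as required. The delicate points are this last identification -- in particular the interplay between the pumping argument and the finite bodies that replication spawns, which may themselves perform the action under scrutiny -- together with the rigorous formulation of the non-mixing property of $\eqD$ used in the second step.
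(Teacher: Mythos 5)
Your proposal has a fatal structural problem: it appeals to Proposition~\ref{prop:compare:seeds} (uniqueness of seeds), but in the paper that proposition is proved \emph{using} Lemma~\ref{lem:cancel:finite} --- its proof begins by writing $P\eqD S|F$, $P'\eqD S'|F'$ and then invokes precisely this lemma to deduce $S\sim S'$. So you cannot use it here: the argument is circular unless you supply an independent proof of seed uniqueness, and the paper's whole development is organised so that uniqueness of seeds comes \emph{after} (and out of) the cancellation lemma you are asked to prove.

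Even granting uniqueness of seeds, your argument does not close. After deducing $S'_1\eqD S'_2$ (the observation that $\eqD$ cannot exchange replicated for finite material is correct), everything hinges on the step $S_i\sim S'_i$, where $S_i|R_i\sim\seed{S_i|R_i}=S'_i|R'_i$ with $S_i,S'_i$ purely replicated and $R_i,R'_i$ finite. But ``a bisimilarity between processes of the form (replicated part $|$ finite part) entails bisimilarity of the replicated parts'' is exactly the statement of Lemma~\ref{lem:cancel:finite}: you have reduced the lemma to an instance of itself, and for that instance you offer only a sketch (pumping, peeling off components), which you yourself flag as the main obstacle. For comparison, the paper proves the lemma directly, with no detour through seeds of both sides or through $\eqD$: writing $S_i=\seed{\!F_i}$, it empties $F'_1$ on the left to get $S_1\sim S_2|F''_2|R_2$, then empties the finite part on the right to get $S_1|R_1\sim S_2$ with $S_1\purg R_1$, injects the second equivalence into the first to get $S_1\sim S_1|R_1|F''_2|R_2$, and finally shows $R_1\sim\nil$ by Lemma~\ref{lem:cancel:finite:part} followed by Lemma~\ref{lem:Rnil}, whence $S_1\sim S_2$ and $\!F_1\sim\!F_2$. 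If you carried out your sketched pumping step rigorously you would essentially reproduce this emptying argument, at which point the appeal to Proposition~\ref{prop:compare:seeds} and the $\eqD$-invariance analysis become superfluous.
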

\begin{proof}
  Write $S_i$ for the seed of $\!F_i$, $i=1,2$. We have
  $S_1|F'_1\sim S_2|F'_2$. By emptying $F'_1$ on the
  left\footnote{In the present case, `emptying $F'_1$' means playing
    all prefixes in $F'_1$ in the bisimulation game between $S_1|F'_1$
    and $S_2|F'_2$ -- we shall reuse this terminology in some proofs
    below.}, we obtain $S_1\sim S_2|F"_2|R_2$ for some $F"_2,
  R_2$.  Now, by emptying on the right, we get $S_1|R_1\sim
  \!S'_2$.  Injecting the latter equivalence in the one we have
  previously obtained gives
  \begin{mathpar}
    S_1\sim S_1|R_1|F"_2|R_2
    \enspace.
  \end{mathpar}
  If $R_1\not\sim\nil$, we can apply
  Lemma~\ref{lem:cancel:finite:part} to deduce $S_1\sim S_1|R_1$.
  But this gives a contradiction by Lemma~\ref{lem:Rnil}. Hence
  $R_1\sim\nil$, which gives us, since we have established
  $S_1|R_1 \sim S_2$, that $S_1 \sim  S_2$. Finally,
  $\!F_1 \sim\!F_2$.  \qed
\end{proof}

\begin{lem}\label{lem:cancel:FR}
  If $S|F\sim S|R$, $S\dis F$, and $S\purg R$, then $F\sim R$.
\end{lem}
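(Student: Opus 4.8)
The plan is to prove $F\sim R$ coinductively, taking as candidate the relation $\R=\set{(F,R)~/~ S|F\sim S|R,\ S\dis F,\ S\purg R}$ and showing $\R\subseteq\sim$ by playing the bisimulation game, using a well-founded induction to discharge residuals. The side-conditions are stable along transitions of the finite parts: by the remark preceding Lemma~\ref{lem:disprops}, if $F\xr{\alpha}F'$ and $R\xr{\alpha}R'$ then $S\dis F'$ and $S\purg R'$, so residuals obtained by firing prefixes stay in $\R$. Since firing a prefix strictly decreases the size of the finite component, I would establish the two simulations separately: that $R$ simulates $F$ by induction on $\size{F}$, and that $F$ simulates $R$ by induction on $\size{R}$ (the argument being symmetric with the roles of $\dis$ and $\purg$ exchanged). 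For the base case $\size{F}=0$, i.e.\ $F=\nil$, the hypothesis reads $S\sim S|R$ with $S\purg R$, so Lemma~\ref{lem:Rnil} forces $R=\nil$ and $F\sim R$ holds trivially.

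For the inductive step, consider a move $F\xr{\alpha}F'$. Lifting it to $S|F\xr{\alpha}S|F'$ and using $S|F\sim S|R$, I obtain a matching $S|R\xr{\alpha}T$ with $T\sim S|F'$. If the defender answers inside $R$, i.e.\ $R\xr{\alpha}R'$ and $T=S|R'$, then $(F',R')\in\R$ with $\size{F'}<\size{F}$, so the induction hypothesis gives $F'\sim R'$ and the move is matched. The analogous easy case occurs for $R\xr{\alpha}R'$ when the defender answers inside $F$.

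The delicate case, and the one I expect to be the main obstacle, is when the defender is seemingly forced to spawn a fresh copy from $S$: here $\alpha=\alpha_i$, $T=S|S_i|R$ and $S|F'\sim S|S_i|R$. Applying the induction hypothesis to $(F',S_i|R)$ -- legitimate since $S\dis F'$, $S\purg S_i|R$ (both $R$ and $S_i$ being purged by $S$, as $S\xr{\alpha_i}S|S_i$), and $\size{F'}<\size{F}$ -- yields $F'\sim S_i|R$. It then remains to show that this spawning answer is never the only one available, i.e.\ that $R$ itself offers a genuine matching move. The decisive ingredient is $S\dis F$: writing the fired prefix as $F\eqD\alpha_i.G|H$ with $F'\eqD G|H$, were $G\sim S_i$ we would get $F\sim\alpha_i.S_i|H$, hence $F\eqD D[\alpha_i.S_i]$ with $D=[\,]|H$ (using that $\eqD$ and $\sim$ coincide on finite processes, Theorem~\ref{thm:distrlaw}), contradicting $S\dis F$; so $G\not\sim S_i$. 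Dually, $S\dis R$ (Lemma~\ref{lem:disprops}(ii)) forbids $R$ from firing into a residual equal to $\alpha_i.S_i$. I would combine these two constraints with $F'\sim S_i|R$ to rule out the forced-spawn configuration, thereby exhibiting an honest response $R\xr{\alpha_i}R'$ matched by the induction hypothesis.

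The hardest part will be precisely this reconciliation in the spawning case, and especially its counterpart in the reverse simulation: when an $R$-move $R\xr{\alpha_j}R'$ is answered by a spawn $S|R'\sim S|S_j|F$, the natural pair $(S_j|F,R')$ enlarges the finite left component, so it is the induction on $\size{R}$ (together with a small auxiliary fact that $\dis$ is preserved by parallel composition, so that $S\dis S_j|F$) that must carry the argument, again invoking $S\dis F$ to exclude the forced spawn. Once both spawning configurations are shown impossible and every move is matched into $\sim$, the relation $\R$ is a bisimulation up to $\sim$ and $F\sim R$ follows.
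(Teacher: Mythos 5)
Your skeleton (induction on size, the bisimulation game, Lemma~\ref{lem:Rnil} for the base case, identification of the spawning answers as the crux) matches the paper's proof, but the decisive step is missing in both spawn cases: you only announce that you ``would combine'' the constraints $G\not\sim S_i$, $S\dis R$ and $F'\sim S_i|R$ to rule out the forced spawn, without showing how. These local constraints cannot do the job, because they are jointly satisfiable: take $S=\,!a.(b|c)$, $F=a.b\,|\,c$, $R=\nil$; then $F'\eqD b|c\sim S_i|R$, $G=b\not\sim S_i$, $S\dis F$, $S\purg R$ and $S\dis R$ all hold, yet $R$ offers no answer whatsoever. (This is not a counterexample to the lemma, since here $S|F\not\sim S|R$; it shows that any refutation of the spawn case must re-engage the global hypothesis $S|F\sim S|R$, not just your listed consequences of it.) What the paper does instead is prove \emph{first}, inside the same induction, that $\size F=\size R$ (one inequality by emptying $F$ and invoking Lemma~\ref{lem:Rnil}; the other by emptying $R$, then Lemma~\ref{lem:cancel:finite:part}, firing a leftover prefix, the induction hypothesis, and Thm.~\ref{thm:distrlaw} to contradict $S\dis F$). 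Once $\size F=\size R$ is known, the forward spawn case dies by arithmetic: since $\eqD$ preserves size and coincides with $\sim$ on finite processes, $F'\sim S_i|R$ would force $\size F-1=\size{S_i}+\size R\geq\size F$. You have no substitute for this size-equality step, so your forward case does not close.

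There is also a structural problem with your induction scheme. In the forward spawn case you apply the induction hypothesis to $(F',S_i|R)$, which needs the measure $\size F$; in the backward spawn case you apply it to $(S_j|F,R')$, which needs the measure $\size R$ -- but the left component has grown there, and dually $\size{S_i|R}\geq\size R$ in the forward case, so no single well-founded measure supports both applications. Nor do two separate one-directional inductions compose: to show $\set{(F,R)}\cup\sim$ is a bisimulation you need the residuals to be fully bisimilar, not merely similar, so each direction's induction hypothesis must be the whole lemma, forcing a single measure. The paper escapes this by never applying the induction hypothesis to the backward spawn pair: it refutes that case outright, emptying $R'$ in $S|S_j|F\sim S|R'$ to reach $S|R''|F'\sim S$ with $F'\neq\nil$ (size equality is used again here to guarantee $F'\neq\nil$), then Lemma~\ref{lem:cancel:finite:part} and the induction hypothesis give $F'\sim\nil$, a contradiction. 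Finally, your ``small auxiliary fact'' that $\dis$ is closed under parallel composition is not small: the distribution law can merge parallel components into one prefixed subterm (e.g.\ $a.b\,|\,a.b\eqD a.(b|a.b)$), so the fact holds only because seed components are non-splittable, which itself requires an argument.
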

\begin{proof}
  We proceed by induction on the size of $F$. If $F=\nil$, we have
  $R=\nil$ by Lemma~\ref{lem:Rnil}; otherwise, we first prove that $F$
  and $R$ have the same size:
  \begin{itemize}
  \item if $\size F < \size R$, by emptying $F$ on the left-hand side,
    we find $R'\neq\nil$ such that $S|R\reduct{\size F} S|R'$, $S\purg
    R'$ and $S\sim S|R'$; this is contradictory with
    Lemma~\ref{lem:Rnil};
  \item if $\size F > \size R$, by emptying $R$ on the right-hand
    side, we find $R',F'$ with $0<\size F'\leq\size F$ such that
    $S|F\reduct{\size R} S|R'|F'$, $S\purg R'$, $S\dis F'$ and
    $S|R'|F'\sim S$. Then we write $F'=\alpha.F_0|F_1$ and deduce
    $S|\alpha.F_0\sim S$ by Lemma~\ref{lem:cancel:finite:part}; then,
    by firing the $\alpha$ prefix, we find $i$ such that
    $\alpha=\alpha_i$ and $S|F_0\sim S|S_i$. We check that $\size
    F_0<\size F$ so that we can apply the induction hypothesis and
    deduce that $F_0\sim S_i$, whence $\alpha.F_0\sim \alpha_i.S_i$,
    and $\alpha.F_0\eqD \alpha.S_i$ by Thm.~\ref{thm:distrlaw}. This
    is contradictory with $S\dis F$ ($\alpha.F_0$ is a sub-term of
    $F$). 
  \end{itemize}
  This concludes the proof that $F$ and $R$ have the same size. We
  then show that the relation $\{(F,R)\}\cup\,\sim$ is a bisimulation:
  \begin{itemize}
  \item when $F\xr\alpha F'$, we find $R'$ such that $S|R\xr\alpha
    S|R'$ and $S|F'\sim S|R'$; by induction, $F'\sim R'$, and we
    deduce that $R'$ is a derivative of $R$, since otherwise, we would
    have $\size R'\geq \size R=1+\size F'$ which is impossible.
  \item when $R\xr\alpha R'$, either we find $F'$ such that
    $F\xr\alpha F'$ and $S|F'\sim S|R'$, which allows us to close the
    diagram, by induction; or we find $i$ such that $S|S_i|F\sim
    S|R'$. In this case, we empty $R'$ on the right-hand side,
    yielding $R''$ and $F'\neq\nil$ such that $S|R''|F'\sim S$; by
    Lemma~\ref{lem:cancel:finite:part}, $S|F'\sim S$, and $F'\sim\nil$
    by induction, which is contradictory. \qed
  \end{itemize}
\end{proof}

% \noindent\textsl{Nota}: we rely on the finiteness of $F$ and $R$ to prove the
% above result, that is, on the fact that replications are not nested.

\begin{lem}\label{lem:FF}
  If $S|F_1\sim S|F_2$ and $S\dis F_i$ ($i=1,2$), then $F_1\sim F_2$.
\end{lem}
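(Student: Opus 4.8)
The plan is to prove the result in two stages: first establish that the hypotheses force $\size{F_1}=\size{F_2}$, and then conclude $F_1\sim F_2$ by exhibiting a bisimulation, using the size equality to discard the problematic case in the bisimulation game. Throughout I keep $S$ fixed and reason about arbitrary pairs of finite processes $G_1,G_2$ with $S|G_1\sim S|G_2$ and $S\dis G_i$.

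For the size equality I mimic the first half of the proof of Lemma~\ref{lem:cancel:FR}. Assuming w.l.o.g.\ $\size{F_1}<\size{F_2}$, I empty $F_1$ on the left-hand side, driving $S|F_1$ to $S$ in $\size{F_1}$ steps; the matching moves drive $S|F_2$ to some $S|G|R$ with $S\dis G$, $S\purg R$ and $S\sim S|G|R$. Since strictly fewer than $\size{F_2}$ prefixes have been consumed, the leftover $G|R$ cannot be $\nil$, so it has a top prefix $\alpha.X$, and Lemma~\ref{lem:cancel:finite:part} yields $S\sim S|\alpha.X$. If $\alpha.X$ belongs to the $R$ part, then $S\purg\alpha.X$ and Lemma~\ref{lem:Rnil} gives a contradiction. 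If it belongs to the $G$ part, then firing $\alpha$ produces $S|X\sim S|S_i$ with $\alpha=\alpha_i$ and $S\dis X$; since $S\purg S_i$, Lemma~\ref{lem:cancel:FR} gives $X\sim S_i$, whence $\alpha.X\eqD\alpha_i.S_i$ by Theorem~\ref{thm:distrlaw}, contradicting $S\dis G$. This establishes $\size{F_1}=\size{F_2}$, and the same reasoning applies to any pair $G_1,G_2$ with $S\dis G_1$ and $S\dis G_2$.

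With equal sizes in hand, I prove $F_1\sim F_2$ by induction on $\size{F_1}+\size{F_2}$, showing that $\{(F_1,F_2)\}\cup\sim$ is a bisimulation. Given $F_1\xr\alpha F_1'$, the move $S|F_1\xr\alpha S|F_1'$ is matched by $S|F_2\xr\alpha Q$ with $S|F_1'\sim Q$, where $Q$ is either $S|F_2'$ with $F_2\xr\alpha F_2'$, or $S|S_i|F_2$ with $\alpha=\alpha_i$ (a replicated prefix of $S$ fires). In the first case $S\dis F_1'$, $S\dis F_2'$ and the sizes have dropped, so the induction hypothesis gives $F_1'\sim F_2'$ and closes the diagram. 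To rule out the second case I use that $S\xr{\alpha_i}S|S_i$ entails $S\purg S_i$, hence $S\dis S_i$ by Lemma~\ref{lem:disprops}(ii), and, since $\dis$ distributes over parallel composition, $S\dis(S_i|F_2)$. Applying the size-equality result to $S|F_1'\sim S|(S_i|F_2)$ forces $\size{F_1'}=\size{S_i}+\size{F_2}$, i.e.\ $\size{F_1}-1=\size{S_i}+\size{F_2}$, which is impossible since $\size{F_1}=\size{F_2}$. Hence the second case never occurs, every move of $F_1$ is answered by a genuine move of $F_2$, and by symmetry $F_1\sim F_2$.

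The main obstacle is precisely this second case of the bisimulation game, where a transition of $F_1$ is matched by firing a replicated component of $S$ rather than a prefix of $F_2$; everything hinges on excluding it. The clean way out is the size count above, which in turn relies on two small facts worth checking carefully: that $S\dis S_i$ (immediate from $S\purg S_i$ via Lemma~\ref{lem:disprops}(ii)), and that $\dis$ is additive over parallel composition, so that $S\dis(S_i|F_2)$. The latter is a routine consequence of the definition of $\dis$ that I would record as a preliminary remark.
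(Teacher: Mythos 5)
Your proof is correct and takes essentially the same two-stage approach as the paper: first forcing $\size{F_1}=\size{F_2}$ by emptying the smaller side and deriving a contradiction, then running the bisimulation game and using the size equality to exclude the case where $S|F_2$ answers by firing a replicated component of $S$. Your write-up is in fact more detailed than the paper's at the points it leaves implicit (handling the spun-off copies $R$ via Lemma~\ref{lem:Rnil}, and recording that $S\dis S_i$ and that $\dis$ is preserved by parallel composition), but the underlying ideas coincide.
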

\begin{proof}
  First observe that if $\size{F_1}<\size{F_2}$, then we can empty
  $F_1$ by playing challenges on the left hand side, and we obtain
  $S\sim S|F'_2$ with $F'_2\not\sim\nil$, which is impossible by
  Lemma~\ref{lem:cancel:FR}. Hence $\size{F_1}=\size{F_2}$.
  
  We then show that $\R = \set{(F_1,F_2)~/~ S|F_1\sim S|F_2}$ is a
  bisimulation. If $F_1\xr{\mu}F'_1$, then $S|F_1\xr{\mu}S|F'_1$,
  which by hypothesis entails that $S|F_2$ can answer this challenge.
  By the remark above, $S|F_2$ necessarily answers by firing $F_2$,
  since otherwise we would get equivalent processes with finite parts
  having different sizes. This allows us to show that $F_2$ can
  answer the challenge, and that \R{} is a bisimulation.
  \qed
\end{proof}

\begin{lem}\label{lem:RR}
  $S|R_1\sim S|R_2$ and $S\purg R_i$ ($i=1,2$) entail $R_1\eqD R_2$.
\end{lem}
\begin{proof}
  By Lemma~\ref{lem:FF}, we have $R_1\sim R_2$ ($S\dis R_i$ by
  Lemma~\ref{lem:disprops}(ii)).  We conclude with
  Thm.~\ref{thm:distrlaw}: $R_1$ and $R_2$ are finite processes.  \qed
\end{proof}

\begin{lem}\label{lem:simseed}
  If $S\sim S'$, then $S\eqD S'$.
\end{lem}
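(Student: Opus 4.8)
The plan is to argue by induction on $\size{S}$, which equals $\size{S'}$ because both are seeds and $\eqD$ preserves size. If $S=\nil$, then $S'\sim\nil$ forces $S'=\nil$ by minimality. Otherwise write $S=\!\alpha_1.S_1\,|\,S_{-1}$ with $S_{-1}=\prod_{i\neq 1}\!\alpha_i.S_i$, which is again a seed (and strictly smaller) by the remark following the definition of seed. Since $S'\sim S=\!\alpha_1.S_1\,|\,S_{-1}$, Lemma~\ref{lem:pouslemma} (applied with $\!F=S'$) yields an index $j$ with $\beta_j=\alpha_1$ and $S'\sim\!\alpha_1.S_1\,|\,S'_{-j}$, where $S'_{-j}=\prod_{l\neq j}\!\beta_l.S'_l$ and $S'=\!\alpha_1.S'_j\,|\,S'_{-j}$. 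The goal then splits into showing $S_1\eqD S'_j$ for the matched bodies and $S_{-1}\eqD S'_{-j}$ for the residual seeds: combining these gives $S=\!\alpha_1.S_1\,|\,S_{-1}\eqD\!\alpha_1.S'_j\,|\,S'_{-j}=S'$ by congruence of $\eqD$.

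For the bodies, observe that $S_1$ and $S'_j$ are \emph{finite} processes, since replication occurs only at top level. Firing the matched replicated prefix on $\!\alpha_1.S'_j\,|\,S'_{-j}\sim\!\alpha_1.S_1\,|\,S'_{-j}$ and injecting the equivalence $\!\alpha_1.S_1\,|\,S'_{-j}\sim S'$ yields $S'\,|\,S_1\sim S'\,|\,S'_j$. Here $S'_j$ is the body of a component of $S'$, so $S'\purg S'_j$, whereas seed minimality forces the foreign body $S_1$ to satisfy $S'\dis S_1$ (were $S_1\eqD D[\beta_l.S'_l]$ for some component of $S'$, law~$(A)$ applied inside $\!\alpha_1.S_1$ would shrink $S$, contradicting that $S$ is a seed, as in Lemma~\ref{lem:disprops}). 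Lemma~\ref{lem:cancel:FR} then gives $S_1\sim S'_j$, which Theorem~\ref{thm:distrlaw} upgrades to $S_1\eqD S'_j$ since both are finite.

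The residual step is the hard part. From $S\sim S'\sim\!\alpha_1.S_1\,|\,S'_{-j}$ and $S=\!\alpha_1.S_1\,|\,S_{-1}$ we get $\!\alpha_1.S_1\,|\,S_{-1}\sim\!\alpha_1.S_1\,|\,S'_{-j}$, and we must cancel the shared replicated component $\!\alpha_1.S_1$ to obtain $S_{-1}\sim S'_{-j}$, after which the induction hypothesis applies to the two smaller seeds. Cancelling a replicated component is \emph{not} sound in general, as $\!\alpha_1.S_1\,|\,\nil\sim\!\alpha_1.S_1\,|\,\alpha_1.S_1$ shows, so the argument must exploit that $S_{-1}$ and $S'_{-j}$ are seeds: by law~$(A)$ and Lemma~\ref{lem:disprops}, neither residual contains $\alpha_1.S_1$ as a reducible subterm, which is exactly what excludes the degenerate match. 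I would then exhibit a bisimulation up to $\sim$ and parallel contexts relating $S_{-1}$ and $S'_{-j}$, using this disjointness as an invariant to show that any $\alpha_1$-challenge answered through the shared prefix on one side is mirrored on the other without consuming the residual. The main obstacle is precisely verifying that this disjointness invariant is preserved throughout the bisimulation game, so that the shared replicated component can be discarded safely.
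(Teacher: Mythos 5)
There is a genuine gap, and it sits exactly where you locate it yourself: the residual step. Your induction needs $S_{-1}\sim S'_{-j}$ to be extracted from $\!\alpha_1.S_1\,|\,S_{-1}\sim\!\alpha_1.S_1\,|\,S'_{-j}$, i.e.\ it needs to \emph{cancel a replicated component}. You rightly note that this cancellation is unsound in general, but you then only gesture at ``a bisimulation up to $\sim$ and parallel contexts'' whose invariant you admit you have not verified. That cancellation \emph{is} the content of the lemma: none of the paper's tools permits deleting a replicated component from both sides of a bisimilarity, and the supporting lemmas (Lemmas~\ref{lem:cancel:FR}, \ref{lem:FF}, \ref{lem:RR}) are designed precisely so that only \emph{finite} residuals are ever cancelled, while the replicated seed is kept whole on both sides. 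So the proposal is an outline whose crux is missing. Two secondary steps are also flawed. First, in the body step, when the matched $\alpha_1$-prefix is fired, the other side is free to answer through \emph{any} component carrying the action $\alpha_1$, not necessarily the matched one; you therefore only obtain $S'\,|\,S_1\sim S'\,|\,S'_l$ for \emph{some} $l$ with $\beta_l=\alpha_1$, whereas your two subgoals ($S_1\eqD S'_j$ and $S_{-1}\eqD S'_{-j}$) must hold for the \emph{same} index $j$. Second, your justification of $S'\dis S_1$ ``as in Lemma~\ref{lem:disprops}'' does not go through: that lemma shrinks a seed via law $(A)$ because the offending subterm $\alpha_i.S_i$ is the body of a component of the \emph{same} process, so the replicated copy $\!\alpha_i.S_i$ is available at top level. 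Here the forbidden subterms $\beta_l.S'_l$ are component bodies of $S'$, not of $S$; one can insert $\!\beta_l.S'_l$ in parallel with $S$ (using $S\sim S'$ and $\!P|\!P\sim\!P$) and then apply $(A)$, but the resulting process has the \emph{same} size as $S$, so no contradiction with minimality is obtained.

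For comparison, the paper's proof avoids both the induction and the decomposition. It plays each prefix $\alpha_i$ on the left of $S\sim S'$; the answer, after injecting $S\sim S'$ by congruence, yields $S\,|\,S_i\sim S\,|\,S'_j$, where both residuals are finite, and Lemma~\ref{lem:RR} (cancellation of finite parts against a fixed replicated seed) gives $\alpha_i.S_i\eqD\alpha'_j.S'_j$ directly. Minimality of the seeds then forces this componentwise matching to be a bijection, whence $S\eqD S'$. The structural lesson is the one your attempt runs up against: keep the entire seed on both sides of every equivalence you manipulate, so that everything you cancel is finite.
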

\begin{proof}
  Write $S = \prod_{i\leq m} \!\alpha_i.S_i$ and $S' = \prod_{j\leq n}
  \!\alpha'_j.S'_j$, play each prefix on the left-hand side and apply
  Lemma~\ref{lem:RR} to show that there exists a map $\sigma:
  [1..n]\to{}[1..m]$, such that
  $\alpha_i.S_i\eqD\alpha'_{\sigma(i)}.S'_{\sigma(i)}$. This map is
  bijective: otherwise we could construct a smaller seed.
\qed
\end{proof}

\begin{prop}[Uniqueness of seeds]\label{prop:compare:seeds}
  Suppose $P\sim P'$, where $P$ and $P'$ are seeds. Then $P\eqD P'$.
\end{prop}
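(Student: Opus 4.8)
The plan is to split each seed into its finite and replicated parts and handle the two parts separately, relying on Lemmas~\ref{lem:cancel:finite}, \ref{lem:simseed} and~\ref{lem:FF}. Using associativity and commutativity of parallel composition, I would write $P\eqD F|S$ and $P'\eqD F'|S'$, where $F,F'$ are finite and $S,S'$ collect the replicated components (so $S,S'$ have only replicated components). Since $P$ and $P'$ are seeds, the remark that any parallel factor of a seed is again a seed ensures that each of $F$, $S$, $F'$, $S'$ is itself a seed; in particular $S$ and $S'$ are replicated seeds of the form handled by Lemma~\ref{lem:simseed}.

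First I would isolate the replicated parts. From $F|S\sim F'|S'$, Lemma~\ref{lem:cancel:finite} (reading $S,S'$ as the replicated factors $\!G,\!G'$ and $F,F'$ as the finite factors) yields $S\sim S'$. As $S$ and $S'$ are replicated seeds, Lemma~\ref{lem:simseed} upgrades this to $S\eqD S'$, which is the first half of what we want.

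Next I would compare the finite parts. Injecting $S\sim S'$ into $F|S\sim F'|S'$ gives $F|S\sim F'|S$, so I would apply Lemma~\ref{lem:FF} with the common replicated part $S$. This requires $S\dis F$ and $S\dis F'$. The first holds by Lemma~\ref{lem:disprops}(i), since $S|F\eqD P$ is a seed. For the second, Lemma~\ref{lem:disprops}(i) gives $S'\dis F'$, and since $S\eqD S'$ the predicate $\dis$ transfers from $S'$ to $S$: by Lemma~\ref{lem:simseed} the replicated components of $S$ and $S'$ coincide, up to $\eqD$ and the bijection it provides, and the definition of $\dis$ refers to those components only through the $\eqD$-reduct relation $\reduct k$. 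Hence $S\dis F'$, and Lemma~\ref{lem:FF} yields $F\sim F'$. As $F$ and $F'$ are finite, Theorem~\ref{thm:distrlaw} upgrades this to $F\eqD F'$.

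Finally, I would recombine: from $F\eqD F'$ and $S\eqD S'$, congruence of $\eqD$ gives $P\eqD F|S\eqD F'|S'\eqD P'$. The step I expect to be the main obstacle is the transfer of the $\dis$ predicate across $S\eqD S'$ needed to invoke Lemma~\ref{lem:FF}; beyond that, the proposition is essentially an orchestration of the preceding lemmas, the genuine combinatorial work having already been carried out in Lemmas~\ref{lem:simseed} and~\ref{lem:FF}.
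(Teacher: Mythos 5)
Your proof is correct and follows essentially the same route as the paper's: decompose each seed into replicated and finite parts, obtain $S\sim S'$ via Lemma~\ref{lem:cancel:finite} and upgrade to $S\eqD S'$ via Lemma~\ref{lem:simseed}, then use Lemma~\ref{lem:FF} (with the $\dis$ hypotheses coming from Lemma~\ref{lem:disprops}(i)) and Thm.~\ref{thm:distrlaw} to handle the finite parts. The only difference is that you spell out the transfer of the $\dis$ predicate across $S\eqD S'$, a step the paper leaves implicit when it writes ``Necessarily, $S\dis F$ and $S'\dis F'$'' before invoking Lemma~\ref{lem:FF}; your justification of that transfer (via the bijection of Lemma~\ref{lem:simseed} and the $\eqD$-closure of $\reduct k$) is sound.
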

\begin{proof}
  Write $P \eqD S | F$ and $P' \eqD S' | F'$. As remarked above, $S$
  and $S'$ are necessarily seeds because $P$ and $P'$ are (hence the
  notation). By Lemma~\ref{lem:cancel:finite}, $S\sim S'$, whence
  $S\eqD S'$ by Lemma~\ref{lem:simseed}.
  Necessarily, $S\dis F$ and $S'\dis F'$, which allows us to deduce,
  using Lemma~\ref{lem:FF}, that $F\sim F'$. Finally, $P\eqD P'$, by
  Thm.~\ref{thm:distrlaw}.
  \qed
\end{proof}

%% \comment{inutile vu la preuve}
% Note that in the result above, $P$ and $P'$ may contain some finite
% components at top-level.

% \paragraph{The same proof for $\pi$.}

% In $\pi$, we must take care of $\alpha$-conversion. For instance one
% law becomes $\!a(x).P~|~C[a(x).P] \sim \!a(x).P~|~C[\nil]$ provided none
% of the free names of $a(x).P$ is captured by $C$.

% Similarly, when writing the relation $P\reduct Q$, we can notice that
% $Q$ may have some free names that are not free in $P$. However,
% observe that in Lemma~\ref{lem:Mnil}, we necessarily have
% $\fn{M}\subseteq\fn{S}$.

% We must also think about relation $R\dis S$: morally, $S$ is in
% parallel with $R$, and $R\dis S$ means $\neg(R\reduct a_i(x_i).S_i)$
% where $\fn{a_i(x_i).S_i}\subseteq\fn{R}$.

\section{Rewriting Processes to Normal Forms}
\label{sec:rwt}

% Grammars:
% \begin{mathpar}
%   F ~::=~ \nil \OR F|F\OR \alpha.F
%   \and
%   P ~::=~ F\OR \!\alpha.F\OR P|P
%   \and
%   C\text{ is $P$ with a hole (no $\![\,]$)}
% \end{mathpar}

\begin{defi}[Rewriting, convertibility]\label{def:rewr:conv}
  Any process $P$ induces a relation between processes, written
  \rewr{P}, defined by the following axioms, modulo $\eqD$:
\begin{mathpar}
  C[\alpha.F] \rewr{\!\alpha.F|F'} C[\nil]
  \quad\text{(B1)}
  \and
  \!\alpha.F |\!\alpha.F | P \rewr{Q} ~\!\alpha.F |  P
  \quad\text{(B2)}
%   \and
%   C[\alpha.(P |  (\alpha.P)^k)] \rewr{Q} C[(\alpha.P)^{k+1}]
%   ~\text{(D)}
\end{mathpar}
The reflexive transitive closure of \rewr{P} is written \wrewr{P};
we say that $P$ and $Q$ are \emph{convertible}, written $P\convert Q$,
whenever there exists a process $T$ such that $P\wrewr{T}T$ and
$Q\wrewr{T}T$.
\end{defi}

Example: we can check that process $\!\alpha.(\beta|\alpha.\beta)$ is
normalised into $\!\alpha.\beta$ via the sequence
$\!\alpha.(\beta|\alpha.\beta) \xr{\!\alpha.\beta|\nil}\!\alpha.\beta$
using axiom (B1). This is the way our rewriting relation proceeds to
compute normal forms. In this case, an equational reasoning would be
possible, as follows: $\!\alpha.(\beta|\alpha.\beta) =
\!(\alpha.\beta|\alpha.\beta) = \!\alpha.\beta |\!\alpha.\beta =
\!\alpha.\beta$ (we use the law $(A')$ for the first step).

\begin{lem}\label{lem:rewr:TT}
  If $Q\wrewr{T}T$, then $Q\sim T$.
\end{lem}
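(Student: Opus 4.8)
The plan is to proceed by induction on the length $n$ of the rewriting sequence witnessing $Q\wrewr{T}T$, peeling off the \emph{first} step rather than the last. The base case $n=0$ is immediate: then $Q\eqD T$, and since $\eqD$ is contained in $\sim$ (Thm.~\ref{thm:distrlaw}), we get $Q\sim T$. For the inductive step, write $Q\rewr{T}Q'\wrewr{T}T$, where the trailing sequence has length $n-1$ and, crucially, still targets $T$ using the relation induced by $T$. The induction hypothesis thus yields $Q'\sim T$, and it remains to establish $Q\sim Q'$, whence $Q\sim T$ by transitivity. I distinguish the two axioms that may justify the first step.

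If the step is an instance of (B2), then $Q\eqD \!\alpha.F\,|\,\!\alpha.F\,|\,R$ and $Q'\eqD \!\alpha.F\,|\,R$ for some $\alpha,F,R$; here $Q\sim Q'$ is just the basic replication law $\!\alpha.F\,|\,\!\alpha.F\sim\!\alpha.F$ (valid for $\sim$, by an easy bisimulation), closed under the parallel context $[\,]\,|\,R$, so $Q\sim Q'\sim T$. The interesting case is (B1), and this is exactly where Prop.~\ref{prop:simplelaw} does the work (it was announced as ``necessary to derive correction of the rewrite system''). Here $Q\eqD C[\alpha.F]$ and $Q'\eqD C[\nil]$, and, since the step uses the relation induced by $T$, the subscript of the axiom forces $T\eqD \!\alpha.F\,|\,F'$ for some $F'$. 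Combining this with the induction hypothesis gives $C[\nil]=Q'\sim T\eqD\!\alpha.F\,|\,F'$, i.e.\ precisely the premise $C[\nil]\sim\!\alpha.F\,|\,P$ of Prop.~\ref{prop:simplelaw} (with $P:=F'$). That proposition then delivers $C[\nil]\sim C[\alpha.F]$, that is $Q'\sim Q$, and hence $Q\sim Q'\sim T$.

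The delicate point, and the main thing to get right, is the \emph{direction} of the induction. One must peel the first step so that the remaining sequence still ends at $T$, which is what supplies the induction hypothesis $Q'\sim T$; this hypothesis is in turn exactly what feeds Prop.~\ref{prop:simplelaw} in the (B1) case. Peeling the last step would instead leave a sequence $Q\wrewr{T}P_{n-1}$ \emph{not} ending at $T$, to which neither the induction hypothesis nor Prop.~\ref{prop:simplelaw} applies; note in particular that an isolated (B1) step $C[\alpha.F]\rewr{T}C[\nil]$ is \emph{not} $\sim$-preserving on its own (law $(A)$ needs the resource $\!\alpha.F$ present), so the presence of $T$ via $Q'\sim T$ is genuinely essential. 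The only remaining work is routine bookkeeping of the ``modulo $\eqD$'' side-conditions and of the witness matching $T\eqD\!\alpha.F\,|\,F'$, all harmless since $\eqD\,\subseteq\,\sim$.
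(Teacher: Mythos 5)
Your proof is correct and follows essentially the same route as the paper: induction on the number of rewrite steps peeling off the first step, with the induction hypothesis $Q'\sim T$ feeding Prop.~\ref{prop:simplelaw} in the (B1) case and the basic replication law handling (B2). The paper's own argument is exactly this, only stated more tersely.
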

\begin{proof}
  By induction over the number of rewrite steps. If this number is
  zero, then this is obvious; suppose now $Q\rewr{T}Q'\wrewr{T}T$. The
  induction hypothesis gives $Q'\sim T$. We reason by cases over the
  axiom that is used to rewrite $Q$ into $Q'$:
  \begin{itemize}
  \item (B1): this means that $Q = C[\alpha.P]$, $Q' = C[\nil]$
    and $T=\!\alpha.P|P'$. From $\!\alpha.P|P'\sim C[\nil]$, we deduce
    $\!\alpha.P|P' \sim C[\alpha.P]$ by Prop.~\ref{prop:simplelaw},
    hence $Q\sim T$.
  \item (B2): we easily have $Q\sim Q'$, hence $Q\sim T$.
    \qed
  \end{itemize}
\end{proof}

\begin{lem}\label{lem:rewr:sn}
  Given $P$, the relation \wrewr{P} terminates.
\end{lem}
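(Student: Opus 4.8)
The plan is to exhibit a well-founded measure that strictly decreases at every rewrite step. The natural candidate is the size $\size{\cdot}$, the number of prefixes, and the whole argument reduces to two observations: that $\eqD$ leaves the size unchanged, and that each of the two axioms strictly decreases it.

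First I would check that $\eqD$ preserves size. The abelian monoid laws are immediate: associativity and commutativity merely rearrange prefixes, and the neutral-element law $F|\nil=F$ only inserts or deletes occurrences of $\nil$, which contribute no prefix. The only law demanding an actual count is the distribution law $\alpha.(F|\alpha.F|\dots|\alpha.F)=\alpha.F|\dots|\alpha.F$. Writing $f=\size{F}$ and letting $m$ be the common number of occurrences of $F$, the right-hand side consists of $m$ factors $\alpha.F$, hence has $m(1+f)$ prefixes; the left-hand side is an outer prefix $\alpha$ applied to one bare copy of $F$ together with $m-1$ factors $\alpha.F$, for a total of $1+f+(m-1)(1+f)=m(1+f)$ prefixes. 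The two counts agree, so $\size{\cdot}$ is well-defined on $\eqD$-classes.

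Next I would observe that each axiom strictly decreases the size. In (B1) the redex $C[\alpha.F]$ is rewritten to $C[\nil]$; since the surrounding context is untouched, the size drops by exactly $\size{\alpha.F}=1+\size{F}\geq 1$. In (B2) one replicated component $\!\alpha.F$ is erased, so again the size drops by $1+\size{F}\geq 1$. Because $\rewr{P}$ is the closure of these axioms under $\eqD$ on both sides, and $\eqD$ preserves size by the previous paragraph, every single $\rewr{P}$ step strictly decreases $\size{\cdot}$.

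Finally, since $\size{\cdot}$ takes values in $\mathbb{N}$, which is well-ordered by $<$, no infinite sequence of $\rewr{P}$ steps can exist; hence $\wrewr{P}$ terminates. The only part requiring care is the size count for the distribution law, and even there the computation is routine once the number of $F$-occurrences on each side is made explicit, so I expect no genuine obstacle; note in particular that the witness index in $\rewr{P}$ plays no role in this argument, as it affects neither the source nor the target of a step.
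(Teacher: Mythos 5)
Your proposal is correct and is exactly the paper's argument: the paper's entire proof is the one-line observation that ``the size of processes strictly decreases along reductions.'' Your write-up simply fills in the details the paper leaves implicit --- the invariance of $\size{\cdot}$ under $\eqD$ (including the distribution law count) and the strict decrease under (B1) and (B2) --- all of which check out.
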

\begin{proof}
  The size of processes strictly decreases along reductions.  \qed
\end{proof}

\begin{lem}\label{lem:eqD:rewr}
  For all $P$, either $P\eqD\seed P$, or $P\rewr{\seed P}P'$ for some
  $P'$ s.t.\ $P\sim P'$.
\end{lem}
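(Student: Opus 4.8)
The plan is to assume $P\not\eqD\seed P$ and manufacture a single sound rewrite step, the two tasks being to \emph{locate} a redex and to prove the step \emph{sound}. First I would note that $P$ cannot itself be a seed: by uniqueness of seeds (Prop.~\ref{prop:compare:seeds}) this would force $P\eqD\seed P$. Hence $\size P>\size{\seed P}$, and the whole game is to find a redex strictly decreasing the size while staying bisimilar to $P$. I would then fix the decompositions $P\eqD S_P\,|\,F_P$ and $\seed P\eqD S\,|\,F$, with $S_P,S$ the replicated components and $F_P,F$ finite, $S$ a seed and $S\dis F$ (Lemma~\ref{lem:disprops}). From $P\sim\seed P$ and Lemma~\ref{lem:cancel:finite} I obtain $S_P\sim S$, so by congruence $S|F_P\sim S|F$, the form on which the cancellation lemmas act.

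To locate the redex I would split cases. If two replicated components of $P$ are $\eqD$-equal, axiom (B2) applies. Otherwise, if $\neg(S\dis F_P)$ then $F_P\eqD D[\alpha_i.S_i]$ for some component $\!\alpha_i.S_i$ of $S$, a (B1) redex in the finite part. In the remaining case $S\dis F_P$, Lemma~\ref{lem:FF} gives $F_P\sim F$, hence $F_P\eqD F$ and $\size{F_P}=\size F$ (the distribution law preserves the number of prefixes), so the excess size sits in the replicated part, $\size{S_P}>\size S$. Here I would show that some component body $G_l$ of $S_P$ satisfies $\neg(S\dis G_l)$, yielding a (B1) redex under a replication: were $S\dis G_l$ for all $l$, then firing each component of $S_P$ and matching it in $S$ would give $S|G_l\sim S|S_j$ with both sides $\dis$-disjoint from $S$ (using $S\dis S_j$ for seed bodies, a minimality property analogous to Lemma~\ref{lem:disprops}); Lemma~\ref{lem:FF} would force $\!\alpha_l^P.G_l\eqD\!\alpha_j.S_j$, and by symmetry and distinctness of components $S_P\eqD S$, contradicting $\size{S_P}>\size S$.

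It then remains to check soundness. For (B2) it is immediate since $\!\alpha.G\,|\,\!\alpha.G\sim\!\alpha.G$. For a (B1) step $C[\alpha_i.S_i]\rewr{\seed P}C[\nil]$ I would invoke Prop.~\ref{prop:simplelaw}, whose hypothesis $C[\nil]\sim\!\alpha_i.S_i\,|\,P''$ must be produced. When the redex is in the finite part this is clean and non-circular: $C[\nil]\eqD S_P|D[\nil]\sim S|D[\nil]\eqD\!\alpha_i.S_i\,|\,(S_{-i}|D[\nil])$, using only $S_P\sim S$; Prop.~\ref{prop:simplelaw} then yields $C[\alpha_i.S_i]\sim C[\nil]$, i.e.\ $P\sim P'$.

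The hard part is soundness of the (B1) step that deletes an occurrence of $\alpha_i.S_i$ lying under a replication, as in $P_1=\!a.(b|a.c)|\!a.(c|a.b)$. There the covering component $\!\alpha_i.S_i$ need not occur in $P$ itself, only in $\seed P$, so law~$(A)$ does not apply directly, and one cannot verify the hypothesis of Prop.~\ref{prop:simplelaw} through $C[\nil]\sim\seed P$ without assuming the very conclusion $P\sim P'$. The obstacle is therefore to exhibit $P''$ with $C[\nil]\sim\!\alpha_i.S_i\,|\,P''$ non-circularly. My plan is to exploit the matching argument above: once the nested occurrence is removed, the reduced body prunes to a seed body $S_j$, which re-exposes enough replicated structure in $C[\nil]$ to recover $\!\alpha_i.S_i$ up to $\sim$ via further uses of law~$(A)$/$(A')$ — exactly as in $P_1$, where deleting $a.c$ exposes $\!a.b$ and gives $C[\nil]\sim\!a.b\,|\,\!a.c$. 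Turning this recovery into a uniform argument, rather than an example-driven one, is where I expect the real work, and the up-to technique behind Prop.~\ref{prop:simplelaw} together with law~$(A')$, to concentrate.
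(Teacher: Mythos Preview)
Your overall strategy coincides with the paper's: decompose $P$ and $\seed P$ into replicated and finite parts, use Lemma~\ref{lem:cancel:finite} to get $S_P\sim S$, then invoke Lemma~\ref{lem:FF} on the component bodies to either exhibit a (B1)/(B2) redex or conclude $P\eqD\seed P$. The only structural difference is the order of the case split: the paper first fires each replicated prefix of $S_P$, matching in $S$, and uses Lemma~\ref{lem:FF} on $S|F_n\sim S|S_m$ to get either $F_n\eqD S_m$ (case~(i)) or $\neg(S\dis F_n)$ (case~(ii), a (B1) redex under the replication); only when all components fall in case~(i) does it look for duplicates ((B2)) and then at the finite part. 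You check duplicates and the finite part first, then the replicated bodies. Both orderings work and lead to the same three redex shapes.

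Where you diverge from the paper is in your treatment of soundness of the (B1) step whose redex lies under a replication. You correctly observe that $\!\alpha_j.S_j$ need not be a syntactic component of $P$, so law~$(A)$ does not apply directly, and that producing the hypothesis $C[\nil]\sim\!\alpha_j.S_j\,|\,P''$ of Prop.~\ref{prop:simplelaw} non-circularly is the crux. The paper, however, does \emph{not} carry out this argument: at the corresponding point it simply writes ``this means that $P$ can be rewritten using axiom~(B1), and the resulting process is bisimilar to $P$'' and moves on. So the difficulty you isolate is real, but the paper's own proof treats it as routine and provides no more detail than you do; your sketch (recover $\!\alpha_j.S_j$ in $C[\nil]$ up to $\sim$ via laws~$(A)/(A')$, as in the $P_1$ example) is already at least as explicit as what the paper offers. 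In short: your proposal matches the paper's proof, and the ``hard part'' you flag is precisely the step the paper leaves unjustified.
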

\begin{proof}
  Write
  \begin{mathpar}
    P = (\prod_i \!\alpha_i.F_i)~|~ F^P\and \text{and}\and \seed{P} =
    (\prod_j \!\alpha_j.S_j)~|~ F^S \enspace,
\end{mathpar}
\noindent and set $S = \prod_j \!\alpha_j.S_j$.
By definition, $P\sim \seed{P}$, which gives, by
Lemma~\ref{lem:cancel:finite}, 
\begin{equation}
\prod_i \!\alpha_i.F_i\sim \prod_j \!\alpha_j.S_j
\enspace.
\label{eq:bang:parts}
\end{equation}
A transition by the left hand side process is answered by the right
hand side process, yielding process
$\prod_i \!\alpha_i.F_i~|~ F_n~\sim~ \!\prod_j \!\alpha_j.S_j~|~ S_m$,
which 
gives, by injecting equivalence~(\ref{eq:bang:parts}), 
$S~|~ F_n\sim S~|~ S_m$.

By Lemma~\ref{lem:FF}, this gives: either $(i)$ $F_n\sim S_m$, which
means by Theorem~\ref{thm:distrlaw} $F_n\eqD S_m$, or $(ii)$
$\neg(S\dis F_n)$ (indeed, $\neg(S\dis S_m)$ is impossible, since this
would allow us to compute a seed having a smaller size than
\seed{P}). In the latter case, $(ii)$, this means that $P$ can be
rewritten using axiom (B1), and the resulting process is bisimilar to
$P$.

Suppose now that we are in case $(i)$ for all possible transitions
from the $\alpha_i.F_i$s, that is, for all $i$, there exists $j$ such that
$\alpha_i.F_i\eqD \alpha_j.S_j$. We observe that the converse (associating a
$i$ to all $j$s) also holds, and that the number of parallel
components in $\prod_i \!\alpha_i.F_i$ is necessarily greater than the
number of components in $S$. In the case where this number is strictly
greater, this means that \rewr{\seed{P}} can be used to rewrite the
left hand side process in~(\ref{eq:bang:parts}), using axiom (B2). In
this case, the resulting process is bisimilar to $P$.

We are left with the case where the two processes have the same number
of components, which entails that they are equated by $\eqD$.

To sum up, we have shown that either $\prod_i \!\alpha_i.F_i$ can be
rewritten, or $\prod_i\!\alpha_i.F_i\eqD S$. In the latter case, we can
inject equivalence~(\ref{eq:bang:parts}) in
$P\sim\seed{P}$, which gives
$S~|~F^P\sim S~|~F^S$. We can apply Lemma~\ref{lem:FF} again,
which gives two possibilities. The first possibility is that
$F^P\sim F^S$, in which case $F^P\eqD F^S$, and finally
$P\eqD \seed{P}$. %, which shows that $P\wrewr{\seed{P}}\seed{P}$.
The second possibility is that $\neg(S\dis F^P)$ (as above,
$\neg(S\dis F^S)$ is not possible since this would allow us to compute
a seed of smaller size). In that case, we can 
rewrite $P$ using (B1), and getting a process bisimilar to $P$.

Finally, either $P\eqD \seed{P}$, or $P$ can be rewritten using
\rewr{\seed{P}}.
\qed
\end{proof}

\begin{prop}\label{prop:seed}
  For all $P$, $P\wrewr{\seed{P}}\seed{P}$.
\end{prop}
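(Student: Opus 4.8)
The plan is to prove the statement by strong induction on the size $\size{P}$, resting on the two lemmas established just above. Lemma~\ref{lem:eqD:rewr} provides a dichotomy at each step (either $P$ is already a seed up to $\eqD$, or it admits a rewrite step towards $\seed{P}$ that preserves bisimilarity), while Lemma~\ref{lem:rewr:sn} guarantees that each such step strictly decreases the size, so that the induction is well-founded. Lemma~\ref{lem:rewr:TT} plays no direct role here but is consistent with the construction, since every intermediate process stays bisimilar to $P$.

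First I would invoke Lemma~\ref{lem:eqD:rewr} on $P$. In the first case, $P\eqD\seed{P}$, and since $\rewr{\seed{P}}$ is taken modulo $\eqD$, the empty reduction sequence already witnesses $P\wrewr{\seed{P}}\seed{P}$, so we are done. In the second case, we obtain a process $P'$ with $P\rewr{\seed{P}}P'$ and $P\sim P'$. Because a single application of (B1) or (B2) erases at least one prefix, we have $\size{P'}<\size{P}$, which makes $P'$ eligible for the induction hypothesis.

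The key observation, and the point on which the whole argument hinges, is that the rewriting index must not drift as we continue reducing $P'$. Applying the induction hypothesis to $P'$ yields $P'\wrewr{\seed{P'}}\seed{P'}$, a reduction indexed by the seed of $P'$, whereas we need one indexed by $\seed{P}$. To reconcile the two, I would argue that $\seed{P'}\eqD\seed{P}$: since $P\sim P'$, the processes bisimilar to $P'$ are exactly those bisimilar to $P$, so $\seed{P}$ is itself a minimal-size process bisimilar to $P'$, i.e.\ a seed of $P'$; uniqueness of seeds (Prop.~\ref{prop:compare:seeds}) then gives $\seed{P'}\eqD\seed{P}$. As $\rewr{\cdot}$ is defined modulo $\eqD$, the relations $\rewr{\seed{P'}}$ and $\rewr{\seed{P}}$ coincide, so the reduction supplied by the induction hypothesis is in fact $P'\wrewr{\seed{P}}\seed{P}$, the endpoint $\seed{P'}$ being identified with $\seed{P}$ up to $\eqD$.

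Composing the initial step with this reduction yields $P\rewr{\seed{P}}P'\wrewr{\seed{P}}\seed{P}$, hence $P\wrewr{\seed{P}}\seed{P}$, as required. I expect the main obstacle to be precisely the seed-invariance argument of the previous paragraph: the construction only goes through because each rewrite step preserves bisimilarity (so that $P\sim P'$) and because seeds of bisimilar processes agree up to $\eqD$, so that a single fixed index $\seed{P}$ can drive the entire normalising sequence rather than a sequence of changing indices.
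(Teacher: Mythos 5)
Your proof is correct and follows essentially the same route as the paper, whose one-line proof ("follows by Lemmas~\ref{lem:eqD:rewr} and~\ref{lem:rewr:sn}") implicitly iterates the dichotomy of Lemma~\ref{lem:eqD:rewr} with well-foundedness supplied by the decreasing size. Your explicit justification that the index stays fixed along the iteration --- $\seed{P'}\eqD\seed{P}$ because rewriting preserves bisimilarity and seeds of bisimilar processes coincide up to $\eqD$ by Prop.~\ref{prop:compare:seeds} --- is a detail the paper leaves implicit, and you fill it in correctly.
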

\begin{proof}
  Follows by Lemmas~\ref{lem:eqD:rewr} and~\ref{lem:rewr:sn}.
\qed
\end{proof}

\begin{thm}[Characterisation]\label{thm:charac}
  $P\convert Q$ iff $P\sim Q$.
\end{thm}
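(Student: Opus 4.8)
The plan is to prove the two directions separately, using the seed as the canonical meeting point for the completeness direction.

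For the soundness direction ($P\convert Q$ implies $P\sim Q$), I would simply unfold the definition of convertibility: there is a process $T$ with $P\wrewr{T}T$ and $Q\wrewr{T}T$. Lemma~\ref{lem:rewr:TT} applies directly to each reduction, yielding $P\sim T$ and $Q\sim T$, and transitivity of $\sim$ then gives $P\sim Q$. This direction is immediate and uses nothing beyond Lemma~\ref{lem:rewr:TT}.

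For the completeness direction ($P\sim Q$ implies $P\convert Q$), the idea is to take $T=\seed{P}$ as the common reduct. Proposition~\ref{prop:seed} already gives $P\wrewr{\seed{P}}\seed{P}$, so half of the convertibility requirement comes for free. It remains to produce $Q\wrewr{\seed{P}}\seed{P}$. First observe that $\seed{P}\sim P\sim Q\sim\seed{Q}$; since $P\sim Q$, the two processes have the same minimal size, so both $\seed{P}$ and $\seed{Q}$ are seeds and they are bisimilar. Proposition~\ref{prop:compare:seeds} (uniqueness of seeds) then gives $\seed{P}\eqD\seed{Q}$.

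The final step is to transport the reduction of $Q$ from the parameter $\seed{Q}$ to the parameter $\seed{P}$. Proposition~\ref{prop:seed} gives $Q\wrewr{\seed{Q}}\seed{Q}$; because the rewriting relation $\rewr{\cdot}$ is defined modulo $\eqD$ (both in its parameter and in its target), the equality $\seed{P}\eqD\seed{Q}$ lets us replace $\seed{Q}$ by $\seed{P}$ throughout, so that $Q\wrewr{\seed{P}}\seed{P}$. Taking $T=\seed{P}$ we then have both $P\wrewr{T}T$ and $Q\wrewr{T}T$, whence $P\convert Q$. The only delicate point is precisely this invariance of $\wrewr{\cdot}$ under $\eqD$, which is what makes the choice of $\seed{P}$ rather than $\seed{Q}$ immaterial; once it is granted, everything else follows directly from Propositions~\ref{prop:seed} and~\ref{prop:compare:seeds}.
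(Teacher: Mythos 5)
Your proof is correct and follows essentially the same route as the paper: soundness via Lemma~\ref{lem:rewr:TT} and transitivity of $\sim$, completeness by reducing both $P$ and $Q$ to their seeds (Proposition~\ref{prop:seed}) and identifying those seeds via Proposition~\ref{prop:compare:seeds}. The only difference is that you spell out the final step---that $\seed{P}\eqD\seed{Q}$ lets the common reduct $T$ be taken as $\seed{P}$ because $\rewr{\cdot}$ is defined modulo $\eqD$---which the paper leaves implicit.
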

\begin{proof}
  Suppose $P\convert Q$. By definition, this gives the existence of
  $T$ s.t.\ $P\wrewr{T}T$ and $Q\wrewr{T}T$. We deduce $P\sim Q$ by
  applying Lemma~\ref{lem:rewr:TT} twice and transitivity of $\sim$.
  Hence $\convert\protect{\subseteq}\sim$.
  
  To establish the converse, suppose $P\sim Q$. Write, using
  Proposition~\ref{prop:seed}, $P\wrewr{\seed{P}}\seed{P}$ and
  $Q\wrewr{\seed{Q}}\seed{Q}$. By definition, $P\sim\seed{P}$ and
  $Q\sim\seed{Q}$, which entails $\seed{P}\sim\seed{Q}$.
  This gives by Proposition~\ref{prop:compare:seeds}
  $\seed{P}\eqD\seed{Q}$, which finally gives $P\convert Q$.
  \qed
\end{proof}

This result gives a way to decide whether $P\sim Q$, via
\convert. Indeed, although Definition~\ref{def:rewr:conv} does not
tell how to find process $T$, that allows one to derive $P\convert Q$,
Thm.~\ref{thm:charac} allows us to reduce this problem to checking
whether $Q\rewr{\seed P}\seed P$. For this, it suffices to look for
\seed{P} among all processes of size smaller than \size{P}.

\section{Adding Synchronisations}
\label{sec:comm}

We can now move to a calculus closer to standard CCS, called
\miniccs{}, by instantiating actions with the following grammar, where
$a$ range over a countable set of \emph{names}: actions are either
input or output prefixes.
\begin{mathpar}
  \alpha ~::=~ a\OR \out a
\end{mathpar}
The LTS we obtain with this definition is not that of CCS: we need to
add the following rules for synchronisations, where $\tau$ is the
label for internal moves.
\begin{mathpar}
  \inferrule{P\xr a P' \and Q\xr{\out a} Q'}{P|Q \xr\tau P'|Q'}
\and
  \inferrule{P\xr{\out a} P' \and Q\xr{ a} Q'}{P|Q \xr\tau P'|Q'}
\end{mathpar}

In doing so, we change the notion of strong bisimilarity: the standard
CCS bisimilarity, that we shall denote using $\sims$, tests internal
moves while our notion of bisimilarity ($\sim$) plays visible
challenges only. Therefore, we have $\sims \,\subseteq\, \sim$.

The following result says that \convert{} is actually enough to
capture strong bisimilarity on \miniccs. As a consequence, we do not
need to test $\tau$ transitions to obtain the discriminating power of
$\sims$.

\begin{prop}\label{prop:equivccs}
  Let $P$ and $Q$ be two processes. Then $P\sims Q$ if and only if
  $P\convert Q$.
\end{prop}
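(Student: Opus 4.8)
The plan is to prove both directions of the biconditional $P \sims Q \iff P \convert Q$, relying on the already-established characterisation $P \convert Q \iff P \sim Q$ (Theorem~\ref{thm:charac}) together with the fact that $\sims \subseteq \sim$, which holds because $\sims$ tests strictly more challenges (including $\tau$-moves). Because of these two facts, the content of the proposition reduces to showing that on this calculus $\sim$ and $\sims$ actually coincide, i.e.\ that playing only visible challenges is already enough to distinguish processes up to standard CCS strong bisimilarity.

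The \emph{only-if} direction is the easy one. If $P \sims Q$ then $P \sim Q$ (by $\sims \subseteq \sim$), and then $P \convert Q$ follows directly from Theorem~\ref{thm:charac}. So the real work is entirely in the \emph{if} direction: assuming $P \convert Q$, hence $P \sim Q$ by Theorem~\ref{thm:charac}, we must upgrade this to $P \sims Q$. The natural strategy is to show that $\sim$ is itself a strong bisimulation in the \miniccs{} LTS, i.e.\ that it is also closed under $\tau$-challenges. Equivalently, I would prove the clean statement $\sim \,=\, \sims$ on \miniccs{}, from which the proposition is immediate.

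To establish that $\sim$ respects $\tau$-transitions, I would take $P \sim Q$ and consider a synchronisation $P \xr\tau P'$, which by the synchronisation rule decomposes as a visible pair $P \xr a P_1 \xr{\out a} P'$ arising from two parallel components firing complementary prefixes $a$ and $\out a$ (or the symmetric case with $\out a$ first). Since $\sim$ matches the visible challenge $P \xr a P_1$, we get $Q \xr a Q_1$ with $P_1 \sim Q_1$, and then matching $P_1 \xr{\out a} P'$ gives $Q_1 \xr{\out a} Q'$ with $P' \sim Q'$. The crux is to argue that these two consecutive visible transitions of $Q$ can be realised as a genuine \emph{synchronisation} $Q \xr\tau Q'$, i.e.\ that the $a$ and $\out a$ are fired by two \emph{distinct} parallel components of $Q$ rather than by the same component in sequence. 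This is where the structural restrictions of the calculus — no restriction, no sum, replication only on top-level prefixes — are essential: one must check that if a single component could supply both halves in sequence, the absence of the restriction operator means that component is still available to synchronise, so the required $\tau$-move exists regardless. The absence of restriction is exactly what prevents a visible transition from ``consuming'' a name that a partner needs, so no $\tau$-move of $P$ can fail to be mirrored by a $\tau$-move of $Q$.

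The main obstacle I anticipate is precisely this decomposition argument: one must handle carefully the interaction between the visible-matching provided by $\sim$ and the combinatorics of \emph{which} parallel components fire in $Q$, in particular the replicated components, which regenerate themselves ($\!\alpha.F \xr\alpha \!\alpha.F \mathbin| F$) and so can always provide a fresh copy of a prefix for synchronisation. I would argue that because a replicated prefix is never exhausted, any configuration of $P$ that can synchronise on $a/\out a$ is matched by a configuration of $Q$ offering the complementary capabilities, and the lack of restriction guarantees these capabilities are globally visible and composable into a $\tau$-step. Assembling these observations shows $\sim$ is a strong (CCS) bisimulation, hence $\sim \subseteq \sims$; combined with $\sims \subseteq \sim$ we get $\sim = \sims$, and the proposition follows by chaining with Theorem~\ref{thm:charac}.
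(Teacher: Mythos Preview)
Your reduction to showing $\sim\,=\,\sims$ is logically sound, but the argument you sketch for the inclusion $\sim\,\subseteq\,\sims$ has a real gap at exactly the point you flag as ``the crux''. From $Q\xr a Q_1\xr{\out a}Q'$ with $P'\sim Q'$ you want to conclude $Q\xr\tau Q'$, but nothing prevents the $\out a$ fired in $Q_1$ from being a prefix that was \emph{guarded} by the $a$ in $Q$ (think of a component of the shape $a.(\out a.G\,|\,H)$). In that case the particular target $Q'$ is simply not reachable from $Q$ by a single $\tau$-step, and your sentence ``the absence of the restriction operator means that component is still available to synchronise'' does not help: a sequential prefix $a.\out a.\cdots$ cannot synchronise with itself regardless of whether restriction is in the calculus. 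To rescue the argument you would have to produce a \emph{different} $Q''$ with $Q\xr\tau Q''$ and $P'\sim Q''$, and that requires a further case analysis you have not supplied. The appeal to replicated prefixes regenerating themselves is beside the point here: the difficulty already arises for purely finite $Q$.

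The paper avoids this detour entirely. Rather than proving the general inclusion $\sim\,\subseteq\,\sims$, it establishes $\convert\,\subseteq\,\sims$ directly by revisiting the proof of Lemma~\ref{lem:rewr:TT}: that proof uses only the distribution law and Prop.~\ref{prop:simplelaw}, so it suffices to check that the explicit bisimulation candidates exhibited in those two places also answer $\tau$-challenges. This is a local verification on concrete relations, not a global argument about arbitrary $\sim$-related pairs, and it sidesteps the decomposition problem you ran into. Your route would, if completed, yield the stronger intermediate statement $\sim\,=\,\sims$ (which is indeed a corollary of the proposition together with Thm.~\ref{thm:charac}), but as written it is incomplete.
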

\begin{proof}
  By Thm.~\ref{thm:charac} and the above remark, it suffices to show
  that $\convert\,\subseteq\,\sims$. This amounts to check that the
  distribution law and Prop.~\ref{prop:simplelaw} are valid for
  $\sims$: we just need to check that silent challenges can be
  answered in the corresponding bisimulation candidates.  \qed
\end{proof}

Note that the $\tau$ prefix is not included in this presentation of
\ccs; indeed, adding $\tau$ to the syntax of actions $(\alpha)$ would
\emph{a priori} break the inclusion $\sims \,\subseteq\, \sim$: tests
performed by $\sim$ on $\tau$-transitions would be too restrictive,
since the only way to answer would be to use a $\tau$ prefix --
synchronisations would not be allowed. In the light of
Prop.~\ref{prop:equivccs}, we actually believe that $\tau$ prefixes
could be added, that is, that they are played in one-to-one
correspondence in bisimilarity games.
%  ; yet, we did not manage to prove
% it.

\medskip

We conclude this section by proving that bisimilarity is closed under
substitutions in \miniccs.  We use $\sigma$ to range over
\emph{substitutions}, that are functions mapping names to names; we
write $P\sigma$ for the process we obtain by applying $\sigma$ on all
names of $P$.

\begin{prop}[$\sim$ is closed under substitutions in \miniccs]
  \label{prop:subst:clos:ccs}
  If $P\sim Q$, then for any $\sigma$, $P\sigma\sim
  Q\sigma$.
\end{prop}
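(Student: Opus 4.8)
The plan is to prove that strong bisimilarity $\sim$ is closed under substitutions in \miniccs{} by exploiting the characterisation already established in Proposition~\ref{prop:equivccs}, together with the uniqueness of seeds. The key observation is that substitutions interact well with the rewriting relation: applying a substitution $\sigma$ commutes, in a suitable sense, with the axioms (B1) and (B2) and with the convertibility relation \convert. Since $P\sim Q$ is equivalent to $P\sims Q$ (as $\sims$ and $\sim$ agree here by the remark preceding Prop.~\ref{prop:equivccs}, and both are captured by \convert), and $P\convert Q$ reduces to a statement about seeds and the equational theory \eqD, I would lift the problem to the level of \eqD{} and rewriting, where the behaviour under substitution is syntactically transparent.

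First I would check that \eqD{} is preserved by substitution: since \eqD{} is the congruence generated by the abelian monoid laws and the distribution law, and each of these laws is stable under renaming of actions (the distribution law $\alpha.(F|\alpha.F|\dots) = \alpha.F|\alpha.F|\dots$ maps to an instance of the same schema under $\sigma$), a straightforward induction on the derivation of $P\eqD Q$ gives $P\sigma\eqD Q\sigma$. Next I would verify that the rewriting axioms are compatible with $\sigma$: because (B1) and (B2) are defined modulo \eqD{} and involve only matching of sub-terms $\alpha.F$ under contexts, if $P\rewr{T}P'$ then $P\sigma\rewr{T\sigma}P'\sigma$; applying $\sigma$ to an entire rewrite sequence therefore yields $P\sigma\wrewr{T\sigma}T\sigma$ whenever $P\wrewr{T}T$. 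From these two facts, $P\convert Q$ immediately entails $P\sigma\convert Q\sigma$: given the witness $T$ with $P\wrewr{T}T$ and $Q\wrewr{T}T$, the process $T\sigma$ witnesses $P\sigma\convert Q\sigma$.

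With these ingredients the main argument is short. Assume $P\sim Q$. By Prop.~\ref{prop:equivccs} (using that on \miniccs{} our characterisation captures $\sims$, which in turn coincides with $\sim$ on the candidates at hand), we obtain $P\convert Q$. By the substitution-compatibility of convertibility just sketched, $P\sigma\convert Q\sigma$, and applying the characterisation once more yields $P\sigma\sim Q\sigma$, as required.

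The main obstacle I anticipate lies in the compatibility of the rewriting axioms with substitution, specifically in handling the side conditions hidden in the axiom schemata. Axiom (B1) rewrites $C[\alpha.F]$ to $C[\nil]$ using a process $T=\!\alpha.F|F'$, and the correctness relies (via Lemma~\ref{lem:rewr:TT} and Prop.~\ref{prop:simplelaw}) on the bisimilarity $T\sim C[\nil]$; after applying $\sigma$ this becomes a claim about $T\sigma\sim C\sigma[\nil]$, which is not purely syntactic. A subtlety is that $\sigma$ may identify previously distinct names, so that a term which did \emph{not} contain a redex of the form $\alpha_i.S_i$ may acquire one after substitution — meaning the seed of $P\sigma$ can be strictly smaller than the $\sigma$-image of the seed of $P$. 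I do not actually need equality of seeds, only that convertibility transfers, so the cleaner route is to avoid reasoning about seeds under $\sigma$ altogether and argue directly that every single rewrite step survives substitution, appealing to the fact that \rewr{T} is closed under \eqD{} (which is itself $\sigma$-stable) and that the defining inclusion $T\sim C[\nil]$ of a valid (B1) step, once established for $P$, is inherited by $P\sigma$ precisely because $\sigma$ can only identify names and hence only add bisimulations. Making this monotonicity argument watertight — that collapsing names never breaks an existing bisimilarity witnessing a rewrite step — is the delicate point, and I would discharge it by re-running the relevant bisimulation candidates of Prop.~\ref{prop:simplelaw} under the renaming, checking that each transition matched before $\sigma$ still has a matching answer after $\sigma$.
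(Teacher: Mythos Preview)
Your core argument is correct and is essentially the paper's proof: show that $\eqD$ and each rewrite step are preserved by substitution, deduce that $\convert$ is preserved, and conclude via the characterisation.

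The concern you raise in your final paragraph, however, is based on a misreading of Definition~\ref{def:rewr:conv}. Axiom~(B1) carries \emph{no} semantic side condition: the relation $\rewr{T}$ is defined purely syntactically (modulo $\eqD$), and a step $C[\alpha.F]\rewr{\!\alpha.F|F'}C[\nil]$ is licensed simply by the shape of the index $T=\!\alpha.F|F'$ and the presence of a sub-term $\alpha.F$ in the source. The bisimilarity $T\sim C[\nil]$ is \emph{not} a precondition for the rewrite step; it arises only in the proof of Lemma~\ref{lem:rewr:TT}, where it is available because the rewrite sequence is assumed to terminate \emph{at} $T$. Consequently, showing that a single (B1) step survives substitution is entirely syntactic---if $T\eqD\!\alpha.F|F'$ then $T\sigma\eqD\!(\alpha\sigma).(F\sigma)|F'\sigma$, and $C[\alpha.F]\sigma=C\sigma[(\alpha\sigma).(F\sigma)]$---and there is no need to re-run the bisimulation candidates of Proposition~\ref{prop:simplelaw} under the renaming. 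Your worry that ``collapsing names never breaks an existing bisimilarity witnessing a rewrite step'' simply does not arise: no bisimilarity witnesses a rewrite step.
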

\begin{proof}
  We show the property for \convert. Suppose $P\convert Q$, which
  gives the existence of $T$ such that, in particular, $P\wrewr{T}T$.
  By inspecting the shape of axioms (B1) and (B2), and reasoning by
  induction over the number of rewrite steps, we can deduce that
  $P\sigma\wrewr{T\sigma}T\sigma$. Similarly,
  $Q\sigma\wrewr{T\sigma}T\sigma$. Hence
  $P\sigma\protect{\convert}Q\sigma$.
\qed
\end{proof}

\section{Concluding Remarks}\label{sec:piccl}

\subsection{Extending our Characterisation}
\label{sec:ccl}

In absence of restriction in the calculus, it is easy to see that
applying replication to prefixed processes only is of no harm in terms
of expressiveness, because of the following rather standard structural
congruence laws for $\equiv$ (which are of course valid strong
bisimilarity laws):
\begin{mathpar}
  \!(P|Q) \equiv  \!P |\!Q
  \and
  \!!P \equiv \!P
  \and
  \!\nil \equiv \nil
\end{mathpar}

We have started investigating the question of characterising $\sim$ in
the case where replication is not at top-level (but where nested
replications -- that is, replications that occur under replications --
are forbidden). The law
\begin{mathpar}
  \alpha.C[\!\alpha.C[\nil]] ~=~ \!\alpha.C[\nil]
\end{mathpar}
\noindent seems important to capture $\sim$ in this setting. We do not
know at the moment whether it is sufficient to characterise $\sim$.

Handling nested replications seems even more challenging.

\subsection{Congruence of Strong Bisimilarity in the $\pi$-calculus}
\label{sec:pi}

Because of the input prefix, congruence of strong bisimilarity
requires closure of this relation under substitutions. 
In presence of sum, this property fails; as~\cite{SW01} shows, this is
also the case as soon as replication and restriction are present in
the calculus (in absence of sum).

\cite{hirschkoff:pous:lmcs:08} shows that congruence holds when we
renounce to replication, that is, in the sub-calculus that features
input and output prefixes, parallel composition and restriction.

Our investigations have convinced us that the same holds if instead we
renounce to restriction: we believe that the reasoning seen above can
be ported to the following subset of the $\pi$-calculus:
\begin{mathpar}
  F ~::=~ \nil \OR F|F\OR a(x).F\OR \outm{a}{b}.F
  \and
  P ~::=~ F\OR \!a(x).F\OR P|P
\end{mathpar}
%  (some care has
% to be taken with $\alpha$-equivalence). 
The analogue of Prop.~\ref{prop:subst:clos:ccs} gives us closure under
substitutions of strong bisimilarity, which in turn yields congruence.
Note that when bisimilarity is closed under substitutions, the ground,
early and late versions of the equivalence coincide. To adapt our
method from \miniccs{} to the $\pi$-calculus, we work with ground
bisimilarity.

\iflong
\paragraph{Some technical results.}

\comment{ca, ca vient d'avant, il faut comprendre si c'est a
  incorporer ou a cacher pudiquement}

\begin{mathpar}
  a(x).P\xr{a(x)}P
  \and
  \outm{a}{b}.P\xr{\outm{a}{b}}P
  \and
  \!a(x).P\xr{a(x)}  \!a(x').(P[x'/x])~|~P
  \quad \boxed{?\!?}
\end{mathpar}

\begin{defi}[Visible ground bisimilarity]
  A symmetric relation \R{} is a visible ground bisimulation whenever
  $P\R Q$ implies
  \begin{itemize}
  \item if $P\xr{a(x)}P'$, then there exist $y,Q'$ s.t.\ 
    $Q\xr{a(y)}Q'$, and there exists $f$, fresh for $P$ and $Q$, s.t.\ 
    $P[f/x] \R Q[f/y]$;
  \item if $P\xr{\outm{a}{b}}P'$, then there exists $Q'$ s.t.\
    $Q\xr{\outm{a}{b}}Q'$ and $P'\R Q'$.
  \end{itemize}
  Visible ground bisimilarity, \visgr, is the greatest visible ground
  bisimulation.
\end{defi}

\begin{lem}\label{ground:subst}
  $P\visgr Q$ implies $P\sigma\visgr Q\sigma$, for all injective
  substitution $\sigma$.
\end{lem}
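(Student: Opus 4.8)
The plan is to exhibit an explicit candidate and prove it is a visible ground bisimulation. I would take $\R = \set{(P\sigma,\,Q\sigma) ~/~ P\visgr Q \text{ and } \sigma \text{ injective}}$, which is symmetric because $\visgr$ is; showing that $\R$ is a visible ground bisimulation then yields $\R\subseteq\visgr$, and the lemma follows by reading off the pair $(P\sigma, Q\sigma)$. The engine of the argument is a transition-transfer (harmony) lemma for injective substitutions: for injective $\sigma$, every transition $P\sigma\xr{\mu}R$ comes from a transition $P\xr{\mu_0}P'$ with $\mu=\mu_0\sigma$ and $R=P'\sigma$ (bound input names chosen fresh for $\sigma$), and conversely each transition of $P$ is reflected by $P\sigma$ with the $\sigma$-image label and derivative. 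This is proved by a routine induction on the derivation, the only delicate clause being the replicated-input rule, for which one invokes the standard fact that $\sigma$ commutes with the unfolding substitution when the bound name is chosen fresh, i.e. $(P[x'/x])\sigma = (P\sigma)[x'/x]$.

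For the output clause, suppose $(P\sigma, Q\sigma)\in\R$ and $P\sigma\xr{\outm{a}{b}}R$. By the transfer lemma there are $c,d$ with $c\sigma=a$, $d\sigma=b$ and $P\xr{\outm{c}{d}}P'$, $R=P'\sigma$. Since $P\visgr Q$ there is $Q'$ with $Q\xr{\outm{c}{d}}Q'$ and $P'\visgr Q'$; transferring forward gives $Q\sigma\xr{\outm{a}{b}}Q'\sigma$, and $(R, Q'\sigma)=(P'\sigma, Q'\sigma)\in\R$ because $P'\visgr Q'$. This closes the output case.

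The input clause is where the work lies (I read it with the continuations, instantiating the bound input name). Suppose $P\sigma\xr{a(x)}R$ with $x$ fresh for $\sigma$. The transfer lemma yields $b$ with $b\sigma=a$ and $P\xr{b(x)}P'$, $R=P'\sigma$. From $P\visgr Q$ we obtain $y,Q'$ with $Q\xr{b(y)}Q'$ and a name $f$, fresh for $P$ and $Q$, with $P'[f/x]\visgr Q'[f/y]$. Transferring forward, $Q\sigma\xr{a(y)}Q'\sigma$. It remains to supply a fresh witness for $(R, Q'\sigma)$: I would pick $g$ fresh for $P\sigma$ and $Q\sigma$ and distinct from the finitely many $\sigma$-images of the free names of $P'$ and $Q'$, and set $\sigma'=\sigma[f\mapsto g]$. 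Since $\sigma$ is injective and $g$ avoids its range on the relevant names, $\sigma'$ is injective. The renaming identity then gives $R[g/x]=(P'\sigma)[g/x]=(P'[f/x])\sigma'$ and $(Q'\sigma)[g/y]=(Q'[f/y])\sigma'$, so from $P'[f/x]\visgr Q'[f/y]$ we get $(R[g/x], (Q'\sigma)[g/y])\in\R$, which is exactly the required condition with witness $g$.

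The main obstacle is precisely this last manipulation, and it is also where injectivity is indispensable: to return the resulting pair to $\R$ one needs the extended substitution $\sigma'$ to be injective, which forces $\sigma$ injective and $g$ off the range of $\sigma$ on the relevant names — this is the reason the statement is restricted to injective $\sigma$. The residual care is $\alpha$-conversion bookkeeping, ensuring that bound input names can uniformly be taken fresh for $\sigma$, $f$ and $g$ so that $\sigma$ genuinely commutes past the input binder and past the witness substitution $[f/x]$; the replicated-input rule must be incorporated into the transfer lemma, but it introduces no new difficulty once this fresh-name discipline is fixed.
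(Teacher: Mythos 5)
The paper gives you nothing to be compared against here: Lemma~\ref{ground:subst} is stated without any proof (it sits in the unfinished long-version material, and the authors explicitly say the $\pi$-calculus development ``has not been formalised in all details''). So I can only judge your argument on its own terms, and on those terms it is the right, standard argument: the relation $\R=\set{(P\sigma,Q\sigma)~/~P\visgr Q,\ \sigma\text{ injective}}$, a transition-transfer lemma for substituted processes (with the replicated-input rule handled by commuting $\sigma$ past the bound-name renaming), an easy output case, and the extension $\sigma'=\sigma[f\mapsto g]$ to push the fresh-name witness through in the input case. The structure matches the existential formulation of the input clause in the paper's definition of $\visgr$, and your observation that this is exactly where injectivity enters is correct.

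One step is stated imprecisely, though it is fixable. You claim that $\sigma'=\sigma[f\mapsto g]$ is injective because ``$g$ avoids the range of $\sigma$ on the relevant names''. That is not enough for global injectivity: substitutions here are total functions on a countable set of names, and if $\sigma$ happens to be a bijection, then \emph{every} $g$ lies in its range, so $\sigma'$ identifies $f$ with $\sigma^{-1}(g)$ and is not injective, no matter how $g$ is chosen. As written, the pair $(R[g/x],(Q'\sigma)[g/y])$ therefore need not literally belong to your $\R$. The repair is routine: only the action of $\sigma'$ on the finitely many free names of $P'[f/x]$ and $Q'[f/y]$ matters, $\sigma'$ \emph{is} injective on that finite set, and any injective map on a finite set of names extends to a globally injective substitution producing the same image processes; alternatively, define $\R$ requiring $\sigma$ to be injective only on $\fn{P}\cup\fn{Q}$. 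Either way the candidate relation closes, and the rest of your fresh-name bookkeeping (choosing bound input names fresh for $\sigma$, $f$ and $g$ so that $\sigma$ commutes past the binder and past $[f/x]$) is exactly the discipline needed.
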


\begin{lem}
  If $P\visgr Q$ and $P\xr{a(x)}P'$, then there exists $Q',y$ such
  that $Q\xr{a(y)}Q'$, and, \emph{for all} $f$ fresh for $P, Q$,
  $P'[f/x]\visgr Q'[f/y]$.
\end{lem}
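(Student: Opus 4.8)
The plan is to derive the universally quantified freshness condition from the existentially quantified one supplied by the definition of $\visgr$, using the injective-substitution closure property of Lemma~\ref{ground:subst}. First I would apply the definition of visible ground bisimilarity directly to the hypotheses $P\visgr Q$ and $P\xr{a(x)}P'$: this yields a name $y$, a derivative $Q'$ with $Q\xr{a(y)}Q'$, and \emph{one} name $f_0$, fresh for $P$ and $Q$, such that $P'[f_0/x]\visgr Q'[f_0/y]$. These $y$ and $Q'$ are exactly the witnesses the statement asks for; the remaining work is to promote the single instance $f_0$ to an arbitrary fresh $f$.

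The key observation I would record next is that input transitions in this fragment do not manufacture new names: every free name of $P'$ other than the placeholder $x$ already occurs in $P$, and likewise every free name of $Q'$ other than $y$ occurs in $Q$. Consequently any name that is fresh for $P$ and $Q$ --- in particular both $f_0$ and an arbitrary given $f$ --- is fresh for $P'$ and $Q'$ as well (apart from the hole being filled). This is the fact that lets me compute substitutions cleanly in the final step.

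Now fix an arbitrary $f$ fresh for $P,Q$. If $f=f_0$ there is nothing to prove, so assume $f\neq f_0$ and take $\sigma$ to be the transposition swapping $f_0$ and $f$ and acting as the identity elsewhere. The point of using a swap rather than the naive renaming $f_0\mapsto f$ is that a transposition is a bijection, hence injective, so that Lemma~\ref{ground:subst} applies: from $P'[f_0/x]\visgr Q'[f_0/y]$ we obtain $(P'[f_0/x])\sigma\visgr (Q'[f_0/y])\sigma$. By the freshness observation, $P'$ contains neither $f_0$ nor $f$ free (save for the occurrences of $x$), so the only occurrences of $f_0$ in $P'[f_0/x]$ are those introduced for $x$, while there are no occurrences of $f$ to be swapped back; hence $(P'[f_0/x])\sigma = P'[f/x]$, and symmetrically $(Q'[f_0/y])\sigma = Q'[f/y]$. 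This yields $P'[f/x]\visgr Q'[f/y]$, as required.

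The main obstacle I anticipate is purely in the bookkeeping of freshness and injectivity: one must respect the fact that Lemma~\ref{ground:subst} is stated only for injective substitutions, which is precisely why the transposition, and not a plain renaming, must be used, and one must check that the chosen $\sigma$ leaves all other free names of $P'$ and $Q'$ untouched so that the two substitution identities hold exactly. Once the ``no new names'' property of input derivatives is in place, these verifications are routine.
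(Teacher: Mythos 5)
Your proof is correct. There is actually nothing in the paper to compare it against: this lemma sits in a compiled-out appendix fragment and is stated there without any proof. The route you take is, however, plainly the intended one, given that the closure of $\visgr$ under injective substitutions is stated immediately before this lemma: extract a single fresh witness $f_0$ from the definition of $\visgr$, observe that in this restriction-free calculus an input derivative satisfies $\fn{P'}\subseteq\fn{P}\cup\{x\}$ (so any name fresh for $P,Q$ is absent from $P'$ and $Q'$ apart from the placeholder), and then transport $f_0$ to an arbitrary fresh $f$ by applying that closure lemma with the transposition exchanging $f_0$ and $f$. This is the standard argument for promoting the ``exists a fresh name'' clause of ground bisimulation to a ``for all fresh names'' statement, and your remark that the naive renaming $f_0\mapsto f$ is not injective -- so that a swap is needed to invoke the injective-substitution lemma -- is exactly the right point of care.
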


Remark: if $P\visgr P|Q$, then $\fn{Q}\subseteq\fn{P}$; this holds
because we do not consider restriction.

\paragraph{Questions related to HOpi.}

\begin{itemize}
\item LICS08 studies HOpi without sum and without restriction. It says
  in the intro that it is remarkable to have a calculus where
  termination is undecidable but (strong) bisimilarity is decidable.
  Is the case for us? (showing undecidability of termination in LICS08
  means encoding Minsky machines: steak!)
\item A possibility would be to encode their HOpi into our pi: if the
  encoding preserves (non-)termination, we can rely on their encoding
  (we lack restriction to write the usual encoding, but maybe this is
  not a problem). But I'm afraid this does'nt work: encoding nested
  outputs (messages in messages) brings nested replications, and we
  are out of our calculus\dots
\end{itemize}

\paragraph{Other questions.}
\begin{itemize}
\item What about weak bisimilarity in our CCS calculus?
\end{itemize}

\fi

\bibliography{refs.bib}
\bibliographystyle{plain}

\end{document}

%%% Local Variables: 
%%% mode: latex
%%% TeX-master: t
%%% Local IspellDict: british
%%% Local IspellPersDict: ./.ispell
%%% compile-command: "rubber -d hp"
%%% End: 